\let\c@theorem\@undefined
\let\theorem\@undefined
\let\endtheorem\@undefined
\let\lemma\@undefined
\let\endlemma\@undefined
\let\corollary\@undefined
\let\endcorollary\@undefined
\let\definition\@undefined
\let\enddefinition\@undefined
\let\example\@undefined
\let\endexample\@undefined
\let\remark\@undefined
\let\endremark\@undefined
\theoremstyle{plain}
\newtheorem{theorem}{Theorem}
\newtheorem{lemma}[theorem]{Lemma}
\newtheorem{corollary}[theorem]{Corollary}
\theoremstyle{definition}
\newtheorem{definition}[theorem]{Definition}
\theoremstyle{remark}
\newtheorem{remark}[theorem]{Remark}
\crefname{lemma}{Lemma}{Lemmas}
\crefname{definition}{Definition}{Definitions}
\theoremstyle{plain}
\newtheorem{proposition}[theorem]{Proposition}
\crefname{proposition}{Proposition}{Propositions}
\theoremstyle{remark}
{\itshape}{\rmfamily}
\crefname{fact}{Fact}{Facts}
\crefname{claim}{Claim}{Claims}
\tikzstyle{small}=[font=\footnotesize]
\tikzset{
    every picture/.style={>=stealth,auto,node distance=2cm},
}
\title{Linear Combinations of Unordered Data Vectors}
\author[1]{Piotr Hofman\footnote{Supported by Labex Digicosme, Univ.~Paris-Saclay, project VERICONISS and by Polish NSC grant 2013/09/B/ST6/01575.}}
\affil[1]{LSV, CNRS \& ENS Cachan, France}
\author[2]{J\'{e}r\^{o}me~Leroux}
\affil[2]
{LaBRI, CNRS, France}
\author[3]{Patrick Totzke\thanks{Supported by the EPSRC, grant EP/M027651/1.}}
\affil[3]{LFCS, University of Edinburgh, UK} 
\authorrunning{P.\ Hofman, J.\ Leroux, and P.\ Totzke}
\subjclass{F.1.1 Models of Computation}
\keywords{Computation with atoms, vector addition systems}
\begin{document}

\maketitle
\begin{abstract}
Data vectors generalise finite multisets: they
are finitely supported functions into a commutative monoid.
We study the question if a given data vector can be expressed
as a finite sum 
of others,
only assuming that 1) the domain is countable
and 2) the given set of base vectors is finite up to permutations of the domain.

Based on a succinct representation of the involved permutations
as integer linear constraints, we derive that
positive instances can be witnessed
in a bounded subset of the domain.

For data vectors over a group 
we moreover
study when a data vector is reversible, that is, if its inverse is
expressible using only nonnegative coefficients.
We show that
if all base vectors are reversible
then the expressibility problem reduces to checking membership
in finitely generated subgroups. 
Moreover, checking reversibility also reduces to such membership tests.

These questions naturally appear in the analysis of counter machines extended with unordered data:
namely, for data vectors over $(\Z^d,+)$ expressibility directly corresponds to
checking state equations for
Coloured Petri nets where tokens can only be tested for equality.
We derive that in this case, expressibility is in \NP, and in $\P$ for reversible instances.
These upper bounds are tight: they match the lower bounds for standard integer vectors (over singleton domains).

\end{abstract}

\section{Introduction}
Finite collections of named values are basic structures used in many areas of
theoretical computer science.
They can be used for instance to model databases snapshots or
define the operational semantics of programming languages.
We can formalize these as
functions $\vec v:\D\to X$
from some countable domain $\D$ of \emph{names} or \emph{data}, into some
value space $X$,
and call such functions ($X$-valued) \emph{data vectors}.
Often the actual names used are not relevant and instead one is
interested in data vectors \emph{up to renaming}, i.e., one wants to consider
vectors $\vec{v}$
and $\vec{w}$ equivalent if $\vec{v} = \vec{w}\circ\theta$ for some permutation
$\theta:\D\to\D$
of the domain.

We consider the case where the value space $X$ has additional algebraic structure.
Namely, we focus on data vectors
where the values are from some commutative monoid $(M,+,0)$
and where all but finitely many names are mapped to the neutral element.
A natural question then asks if a given data vector is expressible as
a sum  of vectors from a given set,
where the monoid operation is lifted to data vectors pointwise.

If the spanning set is finite only up to permutations,
this problem does not immediately boil down to solving finite system of linear equations. 
Also, one cannot simply lift operations on data vectors to equivalence
classes of data vectors
because the result of pointwise applying the operation
depends on the chosen representants.
For example, if we have data vectors mapping colours to integers,
then the vector $\datavec{-1/red!50,1/blue!50}$ ($red\mapsto -1, blue\mapsto 1$)
is equivalent to $\datavec{-1/yellow!50,1/blue!50}$. Yet still,
$$
    \datavec{1/red!50,-1/blue!50,2/yellow!50}
    \,
    +
    \,
    \datavec{-1/red!50,1/blue!50}
    \,
    =
    \,
    \datavec{2/yellow!50}
    \quad
    \neq
    \quad
    \datavec{1/red!50,1/yellow!50}
    \,
    =
    \,
    \datavec{1/red!50,-1/blue!50,2/yellow!50}
    \,
    +
    \datavec{-1/yellow!50,1/blue!50}
    .
$$

\noindent
We thus choose to keep permutations explicit
and consider the \emph{Expressibility} problem: 

\smallskip
\dproblem{
    A finite set $V$ of data vectors and a target vector $\vec{x}$.
    }{
    does $\vec{x}$ equal
    $\sum_{i=1}^{k}\vec{v_i}\circ\theta_i$ for some
    $\vec{v_i}\in V$ and permutations $\theta_i:\D\to\D$?
}
\smallskip

If the domain $\D$ is finite then this just asks if
some finite system of linear equations is satisfiable.
Over infinite domains this corresponds to find a solution of an infinite but regular set of linear equations.
For brevity, we will call
a vector $\vec{x}$ a \emph{permutation sum} of $V$
if it is expressible as 
sum of permutations of vectors in $V$ as above.

\subparagraph*{Contributions and Outline.}
We provide two reductions from the Expressibility problem to problems
of finding solutions for \emph{finite} linear systems over $(M,+)$.

The more general approach is presented in
\cref{sec:hist,sec:naturals} and ultimately works by bounding the number
of different data values necessary to express a \pproduct.
This is based on an analysis of objects we call \emph{histograms}, see \cref{sec:hist},
which sufficiently characterize vectors expressible as \pproducts\  of single vectors.
For any monoid $(M,+)$ the Expressibility problem then reduces to 
finding a non-negative integer solution of a finite system of linear inequations
over $(M,+)$.
In particular, for monoids $(\Z^d,+)$ this provides an \NP\ algorithm, matching the lower bound
from the feasibility of integer linear programs. 

The second approach (\cref{sec:modules})
reduces Expressibility to the problem of
finding an (not necessarily non-negative) integer solution to a finite linear system.
This assumes that $(M,+)$ is a group and that all base vectors are reversible,
i.e., their inverses are expressible.
We show that this reversibility condition can be verified by checking
the existence of rational solutions of a system over $(M,+)$.
For monoids $(\Z^d,+)$, checking this reversibility condition
and solving the Expressibility problem for reversible instances
is possible in deterministic polynomial time.

We show two applications to the
reachability analysis of counter programs extended with data (in \cref{sec:app}).
The first involves finding state invariants for unordered data Petri nets \cite{lazic08,RSF2011,HLLLST2016}
and the second aplication is the reachability problem for blind counter automata \cite{Grei1978}
extended with data.

\subparagraph*{Related Research and Motivation.}
Our main motivation for studying the Expressibility problem comes from the analysis of
Petri nets extended with data -- the model of \emph{Unordered Petri data nets} (UPDN) of \cite{lazic08},
discussed in \cref{subsec:udpn} -- where data vectors of the form $\vec{v}:\D\to\Z^d$ occur naturally.
We are interested in an invariant sometimes called \emph{state equations} in the Petri net literature.

State equation \cite{SateEquationFirtsPaper} is a fundamental invariant of the reachability relation for Petri nets and one of the important ingredients in the proof of decidability of the reachability
relation \cite{mayr81,kosaraju82,Leroux15}. Some modification of it can be also used as heuristic to improve a performance of the standard backward coverability algorithm for Petri nets
(due to well-structured transition systems \cite{finkel98b,abdulla2000c,SS13,abdulla2011classification}) like it was done in \cite{Qcover}.
The coverability problem has been proven to be \EXPSPACE-complete in \cite{rackoff78,lipton76} and in
\cite{BacwardCoverabilityIsEXPSPACE} it was shown that the backward algorithm matches this complexity.

UDPN were introduced in \cite{lazic08}
as one of the extensions of Petri nets in which the coverability problem remains decidable.
Due to results about undecidability of boundedness for Petri nets with resetting arcs \cite{BoundednesInResetNets},
it is not hard to conclude that UDPN is the only extension of Petri nets, among those proposed in \cite{lazic08}, for which reachability problem may be decidable.
The first indicator that reachability may be decidable for UDPN is a characterization of the coverability set established in the paper \cite{HLLLST2016},
in the same paper, as a conclusion, the place-boundedness problem is proven to be decidable.
The recent development in other classes proposed in \cite{lazic08} can be found in papers \cite{RSF2011,rosavelardo11,rosavelardo14,HLLLST2016},
all those results are focused on better understanding of the coverability relation.

UDPN can also be seen as a restriction of more general \emph{Colored Petri nets} (CPN), see for instance \cite{Coloured_Petri_Nets}.
There is a long history of research in the area of restricted CPNs.
Here, we point to results about invariants and identification of certain syntactic substructures: in \cite{FlowsInRegularCPN,Evangelista2007} authors investigates flows in subclasses CPN.
Another important branch of research concerns structural properties of Algebraic Nets \cite{Petri-Nets-and-Algebraic-Specifications} like detecting
siphons \cite{SiphonsInAlgebraicNets} or other kind of place invariants \cite{HomogeneousEquationsOfAlgebraicPetriNets}.

The third perspective is algebraic methods for Petri nets. Linear algebra and linear programming are one of the most fruitful approaches to Petri nets.
A broad overview of algebraic methods for Petri nets can be found in \cite{Linear-algebraic-and-linear-programming-techniques-for-the-analysis-of-place/transition-net-systems}.
A beautiful application of algebraic techniques are results on reachability in continuous Petri nets \cite{Continuous-Petri-Nets:-Expressive-Power-and-Decidability-Issues}.
One can also find variants of state equation for Petri nets with resetting transitions
\cite{HS14} or with inhibitor arcs \cite{Generalized-State-Equation-for-Petri-Nets}.
Finally, algebraic methods are also used in restricted classes of Petri nets like conflict-free and free-choice
Petri nets \cite{DE95book}.

Finally, it is worth mentioning that UDPN can be interpreted as ordinary Petri nets
for sets with \emph{equality atoms}. We refer the reader to \cite{BKLT13,BKL14}
for work on sets with atoms/nominal sets.
Very similar in spirit to our study of data vectors is the work in \cite{KKOT15} that considers constraint satisfaction problems on infinite structures.

\section{Data Vectors}
\label{sec:notations}
\label{sec:datavectors}

In the sequel $\setD$ is a countable set of elements called \emph{data values}
and $(M,+)$ is a commutative monoid with neutral element $0$.
A \emph{data vector} (also \emph{vector} for short) is a total function $\vec{v}:\setD\rightarrow M$
such that the \emph{support}, the set $\support{\vec{v}}\eqdef \{\alpha\in \D \mid \vec{v}(\alpha)\not=0\}$ is finite.
The monoid operation $+$ is lifted to vectors poinwise, so that
$(\vec{v}+\vec{w})(\alpha) \eqdef \vec{v}(\alpha)+\vec{w}(\alpha)$.

Writing $\circ$ for function composition, we see that $\vec{v}\circ\pi$ is a data vector for any data vector $\vec{v}$
and permutation $\pi:\D\to\D$.
A vector $\vec{x}$ 
is said to be a \emph{\pproduct } of a set $V$ of vectors if there are
$\vec{v_1},\ldots,\vec{v_n}$ in $V$ and permutations
$\theta_1,\ldots,\theta_n$ of $\D$ such that
$$\vec{x}=\sum_{i=1}^{n} \vec{v_i} \circ\theta_{i}.$$

Here, we have to emphasize that it is possible that $\vec{v_i}=\vec{v_j}$ for some $i$ and $j$.

\medskip
When working with vectors of the form $\vec{v}\circ\theta$ for permutations $\theta$,
it will be instrumental to specify $\theta$ indirectly using some injection of $\support{\vec{v}}$ into $\D$.
In the remainder of this section we show that one can always do this.

\medskip
Take any finite subset $\setS$ of $\D$ and $\pi:\setS\to\D$ injective.
Since $\pi^{-1}$ is only a partial function,
define the data vector $\vec{v}\circ\pi^{-1}$
so that any $\beta$ not in the range of $\pi$ maps to $0$:
More precisely, let $(\vec{v} \circ\pi^{-1}):\D\to M$ be defined,
for every $\beta\in\D$ as follows.
$$
\vec{v}\circ\pi^{-1}(\beta)=
  \begin{cases}
      \vec{v}(\alpha) & \text{ if }\pi(\alpha)=\beta\\
      0  & \text{ if } \pi(\alpha)\neq\beta\text{ for all } \alpha\in\D
  \end{cases}
  $$
\begin{lemma}\label{lem:inj2perm}
  Let $\vec{v}$ be a data vector.
  For any permutation $\theta:\D\to\D$ there exists an injection $\pi:\support{\vec{v}}\to\D$,
  such that
  $\vec{v}\circ\pi^{-1} = \vec{v}\circ\theta$,
  and vice versa.
\end{lemma}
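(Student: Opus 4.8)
The plan is to prove the two implications separately; both go through explicit constructions, and the verification in each case is a short pointwise calculation split according to whether a data value lies in the relevant finite set.

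For the ``vice versa'' direction, suppose we are given an injection $\pi:\support{\vec{v}}\to\D$. The idea is to extend $\pi$ to a permutation of all of $\D$ and take its inverse. Write $S=\support{\vec{v}}$ and $T=\pi(S)$; both are finite and $|S|=|T|$. Then $\D\setminus S$ and $\D\setminus T$ are equinumerous (either both countably infinite, or both finite with $|\D|-|S|$ elements), so there is a bijection $g:\D\setminus S\to\D\setminus T$. Gluing $\pi$ and $g$ yields a bijection $\bar\pi:\D\to\D$, and I would set $\theta\eqdef\bar\pi^{-1}$. It then remains to check $\vec{v}\circ\theta=\vec{v}\circ\pi^{-1}$ at every $\beta\in\D$: if $\beta\in T$, write $\beta=\pi(\alpha)$ with $\alpha\in S$ and observe that both sides equal $\vec{v}(\alpha)$ (the left because $\bar\pi^{-1}(\beta)=\alpha$); if $\beta\notin T$, then $\beta$ is outside the range of $\pi$, so the right side is $0$ by definition of $\vec{v}\circ\pi^{-1}$, while $\bar\pi^{-1}(\beta)=g^{-1}(\beta)\in\D\setminus S$ lies outside $\support{\vec{v}}$, so the left side is $0$ as well.

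For the forward direction, suppose we are given a permutation $\theta:\D\to\D$. Here I would simply take $\pi\eqdef\theta^{-1}|_{\support{\vec{v}}}$, the restriction of $\theta^{-1}$ to the support; being a restriction of a bijection, it is injective, and its range is $\theta^{-1}(\support{\vec{v}})$. To verify $\vec{v}\circ\pi^{-1}=\vec{v}\circ\theta$ at $\beta\in\D$, I would distinguish whether $\theta(\beta)\in\support{\vec{v}}$, which is exactly the condition that $\beta$ lies in the range of $\pi$: if it does, then $\pi^{-1}(\beta)=\theta(\beta)$ and the two sides plainly agree; if it does not, the left side is $0$ by definition and the right side is $\vec{v}(\theta(\beta))=0$ because $\theta(\beta)$ lies outside the support.

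The constructions are immediate, so the only place needing genuine care is the extension step in the ``vice versa'' direction, where one must argue that a finite partial injection of $\D$ extends to a bijection of $\D$; this is the cardinality bookkeeping comparing $\D\setminus S$ with $\D\setminus T$, and it is the one spot where countability of $\D$ is used. The other minor hazard, recurring in both directions, is to keep the partial-function semantics of $\pi^{-1}$ straight and to exploit that $\vec{v}$ vanishes off the finite set $\support{\vec{v}}$, which is precisely what makes the two-case analyses exhaustive.
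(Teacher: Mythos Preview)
Your proof is correct and follows essentially the same approach as the paper: in the forward direction you both take $\pi=\theta^{-1}|_{\support{\vec{v}}}$, and in the converse you both extend $\pi$ to a bijection of $\D$ and use its inverse as $\theta$, with the verification done by the same two-case split. The one minor difference is in how the extension is built: you invoke an arbitrary bijection $g:\D\setminus S\to\D\setminus T$ via a cardinality argument, whereas the paper gives an explicit finitely-supported permutation (it pairs $S\setminus T$ with $T\setminus S$ and is the identity elsewhere). Your version is slightly cleaner; the paper's has the incidental advantage that the resulting $\theta$ moves only finitely many points, though that extra property is not used anywhere.
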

\begin{proof}
  Let us introduce $\setS=\support{\vec{v}}$
  and first consider a permutation $\theta$.
  We show that the injection 
  $\pi:\setS\rightarrow\setD$, defined by
  $\pi(\alpha)=\theta^{-1}(\alpha)$ for every $\alpha\in\setS$,
  satisfies $\vec{v}\circ\theta=\vec{v} \circ\pi^{-1}$.
  
  Pick any $\beta\in\setD$ and let us write $\alpha\eqdef\theta(\beta)$.
  If $\alpha\in\setS$ then $\pi(\da)=\db$ so, we have that
  $(\vec{v}\circ\pi^{-1})(\beta)=\vec{v}(\alpha) 
  =(\vec{v}\circ\theta)(\beta)$.
  If $\alpha\not\in\setS$ then
  $(\vec{v}\circ\pi^{-1})(\beta)=0$ and 
$\vec{v}\circ \theta(\db)=\vec{v}(\alpha)=0$, by definition of $\vec{v}\circ\pi^{-1}$
  and because $\setS$ is the support of $\vec{v}$.

  \medskip

  Conversely, let us consider a data injection $\pi$ over $\setS$
  and let us prove that there exists a data permutation $\theta$
  such that $\vec{v}\circ\pi^{-1}=\vec{v}\circ\theta$. We introduce
  $\setT=\pi(\setS)$. Since the restriction of $\pi$ on
  $\setS$ is a bijection onto $\setT$, there exists a
  bijection $\pi':\setT\rightarrow\setS$ denoting its inverse.
  We introduce the sets $X=\setS\backslash\setT$, and
  $Y=\setT\backslash\setS$. Since $\setT$ and $\setS$
  have the same cardinal, it follows that $X$ and $Y$ are two finite
  sets with the same cardinal. Hence, there exists a bijection
  $\pi_{Y,X}':X\rightarrow Y$ and its inverse $\pi_{Y,X}:Y\rightarrow
  X$. We
  introduce the function $\theta$ defined for every $\beta\in\setD$
  as follows:
  $$\theta(\beta)=
  \begin{cases}
    \pi'(\beta) & \text{ if }\beta\in \setT\\
    \pi_{Y,X}'(\beta) & \text{ if }\beta\in X\\
    \beta & \text{ otherwise}
  \end{cases}$$
  Observe that $\theta$ is a bijection since the function
  $\theta'$ defined for every $\alpha\in\setD$ as follows is its inverse:
  $$\theta'(\alpha)=
  \begin{cases}
    \pi(\alpha) &\text{ if }\alpha\in\setS\\
    \pi_{Y,X}(\alpha) & \text{ if }\alpha\in Y\\
    \alpha & \text{ otherwise.}
  \end{cases}
  $$\
  We show that $\vec{v}\circ\pi^{-1}=\vec{v}\circ\theta$.
  Fix some $\beta\in\setD$ and assume first that $\beta\in\setT$.
  There exists
  $\alpha\in\setS$ such that $\pi(\alpha)=\beta$. It follows
  that $\pi'(\beta)=\alpha
  =
  \theta(\beta)$. Hence,
  $(\vec{v}\circ\theta)(\beta)=\vec{v}(\alpha)
  =(\vec{v}\circ\pi^{-1})(\beta)$.
  
  Now, assume that $\beta\not\in \setT$. In that case, notice that
  $\theta(\beta)$ is not in $\setS$ no matter if $\beta\in X$ or
  not. Thus $\vec{v}\circ\theta(\beta)=0$. Observe that if
  $\pi^{-1}(\{\beta\})$ is empty, we deduce that
  $(\vec{v}\circ\pi^{-1})(\beta)=0$. If $\pi^{-1}(\beta)=\{\alpha\}$ then
  $(\vec{v}\circ\pi^{-1})(\beta)=\vec{v}(\alpha)$.
  But since $\beta\not\in\setT$, we deduce
  that $\alpha\not\in\setS$. Therefore $\vec{v}(\alpha)=0$ and
  we derive that $(\vec{v}\circ\pi^{-1})(\beta)=0$. We have proved that
  $\vec{v}\circ\pi^{-1}=\vec{v}\circ\theta$.
\end{proof}

\section{Histograms}\label{sec:Histograms}
\label{sec:hist}
In this section we develop the notion of histograms.
These are combinatorical objects that will be used
in the next section to characterize \pproducts\ over singleton sets $V$.

\begin{definition}\label{def:histogram}
A \emph{histogram} over a finite set $\setS\subseteq \D$
is a total function $H:\setS\times\D\to\N$ such that for some $n\in\setN$, called the \emph{degree} of $H$, the following two conditions hold.
\begin{enumerate}
  \item $\sum_{\beta\in\D}H(\alpha,\beta)=n$ for any $\alpha\in\setS$
  \item $\sum_{\alpha\in\setS}H(\alpha,\beta)\leq n$ for any $\beta\in\D$.
\end{enumerate}
A histogram of degree $n=1$ is called \emph{simple}.
Histograms with the same signature, i.e.~the same sets $\setS$ and $\setD$,
can be partially ordered and summed pointwise and the degree of the sum is the sum of degrees.
The \emph{support} of $H$ is the set $\support{H}\eqdef\{\beta\mid \sum_{\alpha\in\setS} H(\alpha,\beta)>0\}$.
\end{definition}

The following theorem states the main combinatorial property we are interested in,
namely that simple histograms over $\setS$ generate as finite sums the class of all histograms over $\setS$.
In particular, any histogram can be decomposed into finitely many
simple histograms over the same signature (see \cref{fig:hist} for an illustration).

\begin{theorem}\label{thm:histogram}
  A function $H:\setS\times\setD\to \setN$ is a histogram of degree $n\in \N$
  if, and only if, $H$ is the sum of $n$ simple histograms over $\setS$.
\end{theorem}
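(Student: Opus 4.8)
The ``if'' direction is immediate from \cref{def:histogram}: since degrees add under pointwise summation and a simple histogram has degree $1$, a sum of $n$ simple histograms over $\setS$ is a histogram over $\setS$ of degree $n$. For the ``only if'' direction I would argue by induction on the degree $n$. The case $n=0$ is trivial, since condition~1 of \cref{def:histogram} then forces $H$ to be the zero function, which is the empty sum of simple histograms. For $n\ge 1$ it suffices to find a single simple histogram $h$ with $h\le H$ pointwise such that $H-h$ is again a histogram (necessarily of degree $n-1$), and then apply the induction hypothesis to $H-h$. Spelling out \cref{def:histogram}, such an $h$ is exactly the indicator of an injection $f:\setS\to\D$ subject to: (i) $H(\alpha,f(\alpha))>0$ for every $\alpha\in\setS$, which is equivalent to $h\le H$ and hence guarantees $H-h\ge 0$; and (ii) $f(\setS)$ contains every \emph{full} column, i.e.\ every $\beta$ with $\sum_{\alpha\in\setS}H(\alpha,\beta)=n$, which is precisely what is needed so that condition~2 of \cref{def:histogram} remains valid for $H-h$ with bound $n-1$ (at non-full columns this holds automatically, and condition~1 for $H-h$ holds for free because $h$ is simple).

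To produce such an $f$, I would work in the finite bipartite graph $G$ with colour classes $\setS$ and $\support{H}$ and an edge between $\alpha$ and $\beta$ whenever $H(\alpha,\beta)>0$; finiteness of $\support{H}$ follows since the total mass $\sum_{\alpha\in\setS}\sum_{\beta\in\D}H(\alpha,\beta)$ equals $n\cdot|\setS|$. Double counting the edges incident to a set $T\subseteq\setS$ gives the value $n|T|$ by condition~1 from below and at most $n\cdot|N(T)|$ by condition~2 from above, so $|N(T)|\ge|T|$; by Hall's theorem $G$ has a matching $M_1$ saturating $\setS$. The mirror‑image count over a subset $R$ of the set $B$ of full columns (each such $\beta$ emits exactly $n$ edges, each $\alpha\in\setS$ absorbs at most $n$) gives $|N(R)|\ge|R|$, so $G$ also has a matching $M_2$ saturating $B$. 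By the Mendelsohn--Dulmage theorem (a standard symmetric‑difference argument on $M_1\triangle M_2$) there is then one matching $M$ saturating $\setS\cup B$; since it saturates $\setS$ in a bipartite graph it consists of exactly $|\setS|$ edges and is therefore the graph of an injection $f:\setS\to\D$ meeting~(i), and it saturates $B$, so it meets~(ii). Setting $h(\alpha,\beta)=1$ iff $f(\alpha)=\beta$ then makes $H-h$ a histogram of degree $n-1$ and completes the induction step.

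The only genuinely delicate point is condition~(ii): a matching merely saturating $\setS$ can miss some full columns, so a single naive application of Hall's theorem does not suffice, and this is exactly the gap closed by combining $M_1$ and $M_2$. As an alternative I could derive the whole statement in one stroke by viewing $H$ as a bipartite multigraph with $H(\alpha,\beta)$ parallel edges and maximum degree $n$, and invoking K\"onig's edge‑colouring theorem for bipartite multigraphs: the $n$ colour classes are then exactly $n$ simple histograms summing to $H$. I would choose whichever route keeps the section self‑contained, and would in either case likely include the small picture promised by \cref{fig:hist}; if I take the matching route I would prove the required saturation lemma from scratch via an alternating‑path argument rather than cite it.
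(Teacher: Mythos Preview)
Your proposal is correct and follows essentially the same approach as the paper: induction on the degree, peeling off one simple histogram obtained from a bipartite matching that saturates both $\setS$ and the set of full columns, with Hall's theorem applied twice via the same double-counting arguments. The paper states and proves your ``Mendelsohn--Dulmage'' step as a separate lemma (\cref{lem:twoMatchings}) via exactly the path/cycle decomposition you sketch; your alternative one-shot argument via K\"onig's edge-colouring theorem is not used in the paper but would also work.
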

\begin{figure}[H]
\caption{
The center depicts a histogram $H:\setS\x\D\to\N$ of degree 4, where $\setS=\{\alpha_1,\alpha_2\}$
and $\support{H}=\{\beta_1,\beta_2,\beta_3,\beta_5\}$.
To the left (in blue) and to the right (in red)
are decompositions into four simple histograms each.
\label{fig:hist}}
\begin{center}
\vspace{-0.5cm}
 \begin{tikzpicture}[scale=0.5]

 \pgfsetxvec{\pgfpoint{1cm}{0cm}}
 \pgfsetyvec{\pgfpoint{0cm}{1cm}}
 \pgfsetzvec{\pgfpoint{1cm}{0cm}}
 
 \coordinate(O) at (0,0,0);
 \coordinate(X) at (1,0,0);
 \coordinate(Y) at (0,1,0);
 \coordinate(Z) at (0,0,1);
 \def\xmax{2}
 \def\ymax{6}
  \newcommand{\hist}[1]{
\foreach \x in {0,...,\xmax}
{
    \draw[-] ($(#1)+(\x,0,0)$)--($(#1)+(\x,\ymax,0)$);
    \draw[dotted] ($(#1)+(\x,\ymax,0)$) -- ($(#1)+(\x,\ymax+1,0)$);
    \foreach \y in {0,...,\ymax}
    {
        \draw[-] ($(#1)+(0,\y,0)$)--($(#1)+(\xmax,\y,0)$);
    }
    
}
  }
  \newcommand{\prostokat}[3][]{
    \draw[#1] ($(#2)$)--($(#2)+(X)$)--($(#2)+(X)+(Y)$)--($(#2)+(Y)$)--cycle;
    \node[font=\scriptsize] at ($(#2)+0.5*(X)+0.5*(Y)$){#3};
  }

  \begin{scope}[]
      \coordinate (offset) at (0,0,0);
      \def\col{black!50}
      \prostokat[draw=black,fill=white]{$(offset)+(0,0,0)$}{0}
      \prostokat[draw=black,fill=white,fill=\col]{$(offset)+(1,0,0)$}{3}
      \prostokat[draw=black,fill=white,fill=\col]{$(offset)+(0,1,0)$}{2}
      \prostokat[draw=black,fill=white]{$(offset)+(1,1,0)$}{0}
      \prostokat[draw=black,fill=white,fill=\col]{$(offset)+(0,2,0)$}{1}
      \prostokat[draw=black,fill=white,fill=\col]{$(offset)+(1,2,0)$}{1}
      \prostokat[draw=black,fill=white]{$(offset)+(0,3,0)$}{0}
      \prostokat[draw=black,fill=white]{$(offset)+(1,3,0)$}{0}
      \prostokat[draw=black,fill=white,fill=\col]{$(offset)+(0,4,0)$}{1}
      \prostokat[draw=black,fill=white]{$(offset)+(1,4,0)$}{0}
      \prostokat[draw=black,fill=white]{$(offset)+(0,5,0)$}{0}
      \prostokat[draw=black,fill=white]{$(offset)+(1,5,0)$}{0}

      \foreach \i in {1,...,\xmax}{
      \prostokat[draw=white]{$(offset)+(\i-1,-1.25,0)$}{$\alpha_\i$}{fill=white}
      }
      \foreach \i in {1,...,\ymax}{
      \prostokat[draw=white]{$(offset)+(0,\i-1,0)+(-1.25,0,0)$}{$\beta_{\i}$}{fill=white}
      }
      \hist{offset}
  \end{scope}
      
\begin{scope}[]
  \def\col{blue!50}
  \def\zdist{-2.5}
  \def\zfirstdist{-4.5}
  \begin{scope}[]
      \coordinate (offset) at (0,0,0*\zdist+\zfirstdist);
      \prostokat[draw=black,fill=white]{$(offset)+(0,0,0)$}{0}
      \prostokat[draw=black,fill=white,fill=\col]{$(offset)+(1,0,0)$}{1}
      \prostokat[draw=black,fill=white,fill=\col]{$(offset)+(0,1,0)$}{1}
      \prostokat[draw=black,fill=white]{$(offset)+(1,1,0)$}{0}
      \prostokat[draw=black,fill=white]{$(offset)+(0,2,0)$}{0}
      \prostokat[draw=black,fill=white]{$(offset)+(1,2,0)$}{0}
      \prostokat[draw=black,fill=white]{$(offset)+(1,3,0)$}{0}
      \prostokat[draw=black,fill=white]{$(offset)+(0,3,0)$}{0}
      \prostokat[draw=black,fill=white]{$(offset)+(1,4,0)$}{0}
      \prostokat[draw=black,fill=white]{$(offset)+(0,4,0)$}{0}
      \prostokat[draw=black,fill=white]{$(offset)+(0,5,0)$}{0}
      \prostokat[draw=black,fill=white]{$(offset)+(1,5,0)$}{0}
      \hist{offset}
  \end{scope}
  \begin{scope}[]
      \coordinate (offset) at (0,0,1*\zdist+\zfirstdist);
      \prostokat[draw=black,fill=white]{$(offset)+(0,0,0)$}{0}
      \prostokat[draw=black,fill=white,fill=\col]{$(offset)+(1,0,0)$}{1}
      \prostokat[draw=black,fill=white,fill=\col]{$(offset)+(0,1,0)$}{1}
      \prostokat[draw=black,fill=white]{$(offset)+(1,1,0)$}{0}
      \prostokat[draw=black,fill=white]{$(offset)+(0,2,0)$}{0}
      \prostokat[draw=black,fill=white]{$(offset)+(1,2,0)$}{0}
      \prostokat[draw=black,fill=white]{$(offset)+(1,3,0)$}{0}
      \prostokat[draw=black,fill=white]{$(offset)+(0,3,0)$}{0}
      \prostokat[draw=black,fill=white]{$(offset)+(1,4,0)$}{0}
      \prostokat[draw=black,fill=white]{$(offset)+(0,4,0)$}{0}
      \prostokat[draw=black,fill=white]{$(offset)+(0,5,0)$}{0}
      \prostokat[draw=black,fill=white]{$(offset)+(1,5,0)$}{0}
      \hist{offset}
  \end{scope}
  \begin{scope}[]
      \coordinate (offset) at (0,0,2*\zdist+\zfirstdist);
      \prostokat[draw=black,fill=white]{$(offset)+(0,0,0)$}{0}
      \prostokat[draw=black,fill=white,fill=\col]{$(offset)+(1,0,0)$}{1}
      \prostokat[draw=black,fill=white]{$(offset)+(0,1,0)$}{0}
      \prostokat[draw=black,fill=white]{$(offset)+(1,1,0)$}{0}
      \prostokat[draw=black,fill=white,fill=\col]{$(offset)+(0,2,0)$}{1}
      \prostokat[draw=black,fill=white]{$(offset)+(1,2,0)$}{0}
      \prostokat[draw=black,fill=white]{$(offset)+(1,3,0)$}{0}
      \prostokat[draw=black,fill=white]{$(offset)+(0,3,0)$}{0}
      \prostokat[draw=black,fill=white]{$(offset)+(1,4,0)$}{0}
      \prostokat[draw=black,fill=white]{$(offset)+(0,4,0)$}{0}
      \prostokat[draw=black,fill=white]{$(offset)+(0,5,0)$}{0}
      \prostokat[draw=black,fill=white]{$(offset)+(1,5,0)$}{0}
      \hist{offset}
  \end{scope}
  \begin{scope}[]
      \coordinate (offset) at (0,0,3*\zdist+\zfirstdist);
      \prostokat[draw=black,fill=white]{$(offset)+(0,0,0)$}{0}
      \prostokat[draw=black,fill=white]{$(offset)+(0,1,0)$}{0}
      \prostokat[draw=black,fill=white]{$(offset)+(1,0,0)$}{0}
      \prostokat[draw=black,fill=white]{$(offset)+(1,1,0)$}{0}
      \prostokat[draw=black,fill=white]{$(offset)+(0,2,0)$}{0}
      \prostokat[draw=black,fill=white,fill=\col]{$(offset)+(1,2,0)$}{1}
      \prostokat[draw=black,fill=white]{$(offset)+(0,3,0)$}{0}
      \prostokat[draw=black,fill=white]{$(offset)+(1,3,0)$}{0}
      \prostokat[draw=black,fill=white,fill=\col]{$(offset)+(0,4,0)$}{1}
      \prostokat[draw=black,fill=white]{$(offset)+(1,4,0)$}{0}
      \prostokat[draw=black,fill=white]{$(offset)+(0,5,0)$}{0}
      \prostokat[draw=black,fill=white]{$(offset)+(1,5,0)$}{0}
      \hist{offset}
  \end{scope}
\end{scope}
\begin{scope}[]
  \def\col{red!50}
  \def\zdist{2.5}
  \def\zfirstdist{4}
  \begin{scope}[]
      \coordinate (offset) at (0,0,0*\zdist+\zfirstdist);
      \prostokat[draw=black,fill=white]{$(offset)+(0,0,0)$}{0}
      \prostokat[draw=black,fill=white,fill=\col]{$(offset)+(1,0,0)$}{1}
      \prostokat[draw=black,fill=white]{$(offset)+(0,1,0)$}{0}
      \prostokat[draw=black,fill=white]{$(offset)+(1,1,0)$}{0}
      \prostokat[draw=black,fill=white]{$(offset)+(0,2,0)$}{0}
      \prostokat[draw=black,fill=white]{$(offset)+(1,2,0)$}{0}
      \prostokat[draw=black,fill=white]{$(offset)+(0,3,0)$}{0}
      \prostokat[draw=black,fill=white]{$(offset)+(1,3,0)$}{0}
      \prostokat[draw=black,fill=white,fill=\col]{$(offset)+(0,4,0)$}{1}
      \prostokat[draw=black,fill=white]{$(offset)+(1,4,0)$}{0}
      \prostokat[draw=black,fill=white]{$(offset)+(0,5,0)$}{0}
      \prostokat[draw=black,fill=white]{$(offset)+(1,5,0)$}{0}
      \hist{offset}
  \end{scope}
  \begin{scope}[]
      \coordinate (offset) at (0,0,1*\zdist+\zfirstdist);
      \prostokat[draw=black,fill=white]{$(offset)+(0,0,0)$}{0}
      \prostokat[draw=black,fill=white,fill=\col]{$(offset)+(1,0,0)$}{1}
      \prostokat[draw=black,fill=white]{$(offset)+(0,1,0)$}{0}
      \prostokat[draw=black,fill=white]{$(offset)+(1,1,0)$}{0}
      \prostokat[draw=black,fill=white,fill=\col]{$(offset)+(0,2,0)$}{1}
      \prostokat[draw=black,fill=white]{$(offset)+(1,2,0)$}{0}
      \prostokat[draw=black,fill=white]{$(offset)+(1,3,0)$}{0}
      \prostokat[draw=black,fill=white]{$(offset)+(0,3,0)$}{0}
      \prostokat[draw=black,fill=white]{$(offset)+(1,4,0)$}{0}
      \prostokat[draw=black,fill=white]{$(offset)+(0,4,0)$}{0}
      \prostokat[draw=black,fill=white]{$(offset)+(0,5,0)$}{0}
      \prostokat[draw=black,fill=white]{$(offset)+(1,5,0)$}{0}
      \hist{offset}
  \end{scope}
  \begin{scope}[]
      \coordinate (offset) at (0,0,2*\zdist+\zfirstdist);
      \prostokat[draw=black,fill=white]{$(offset)+(0,0,0)$}{0}
      \prostokat[draw=black,fill=white,fill=\col]{$(offset)+(1,0,0)$}{1}
      \prostokat[draw=black,fill=white,fill=\col]{$(offset)+(0,1,0)$}{1}
      \prostokat[draw=black,fill=white]{$(offset)+(1,1,0)$}{0}
      \prostokat[draw=black,fill=white]{$(offset)+(0,2,0)$}{0}
      \prostokat[draw=black,fill=white]{$(offset)+(1,2,0)$}{0}
      \prostokat[draw=black,fill=white]{$(offset)+(1,3,0)$}{0}
      \prostokat[draw=black,fill=white]{$(offset)+(0,3,0)$}{0}
      \prostokat[draw=black,fill=white]{$(offset)+(1,4,0)$}{0}
      \prostokat[draw=black,fill=white]{$(offset)+(0,4,0)$}{0}
      \prostokat[draw=black,fill=white]{$(offset)+(0,5,0)$}{0}
      \prostokat[draw=black,fill=white]{$(offset)+(1,5,0)$}{0}
      \hist{offset}
  \end{scope}
  \begin{scope}[]
      \coordinate (offset) at (0,0,3*\zdist+\zfirstdist);
      \prostokat[draw=black,fill=white]{$(offset)+(0,0,0)$}{0}
      \prostokat[draw=black,fill=white]{$(offset)+(1,0,0)$}{0}
      \prostokat[draw=black,fill=white,fill=\col]{$(offset)+(0,1,0)$}{1}
      \prostokat[draw=black,fill=white]{$(offset)+(1,1,0)$}{0}
      \prostokat[draw=black,fill=white]{$(offset)+(0,2,0)$}{0}
      \prostokat[draw=black,fill=white,fill=\col]{$(offset)+(1,2,0)$}{1}
      \prostokat[draw=black,fill=white]{$(offset)+(0,3,0)$}{0}
      \prostokat[draw=black,fill=white]{$(offset)+(1,3,0)$}{0}
      \prostokat[draw=black,fill=white]{$(offset)+(0,4,0)$}{0}
      \prostokat[draw=black,fill=white]{$(offset)+(1,4,0)$}{0}
      \prostokat[draw=black,fill=white]{$(offset)+(0,5,0)$}{0}
      \prostokat[draw=black,fill=white]{$(offset)+(1,5,0)$}{0}
      \hist{offset}
  \end{scope}
\end{scope}
\end{tikzpicture}

\vspace{-0.5cm}
\end{center}
\end{figure}
For the proof of \cref{thm:histogram} we need a lemma from graph theory.
We refer the reader to \cite{book:Deistel} for relevant definitions
and recall here only that in a graph $(V,E)$,
a \emph{matching} of a set $S\subseteq V$ of nodes is a set $M\subseteq E$
of pairwise non-adjacent edges that covers all nodes in $S$.

\begin{lemma}\label{lem:twoMatchings}
Let $G=(L\cup R, E)$ be a bipartite graph. If there is a matching of $L'\subseteq L$ and a matching of $R'\subseteq R$
  then there is a matching of $L'\cup R'$. 
\end{lemma}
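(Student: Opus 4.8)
The plan is to use the classical theory of alternating paths, in the style of Berge's augmenting-path lemma for matchings in bipartite graphs. We are given a matching $M_L$ covering $L'\subseteq L$ and a matching $M_R$ covering $R'\subseteq R$, and we want a single matching covering $L'\cup R'$.

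First I would consider the symmetric difference $M_L\triangle M_R$. As in the standard argument, since both $M_L$ and $M_R$ are matchings, every vertex has degree at most one in each, so in $M_L\triangle M_R$ every vertex has degree at most two. Hence the connected components of the subgraph $(L\cup R,\, M_L\triangle M_R)$ are simple paths and simple cycles, in each of which the edges strictly alternate between $M_L$ and $M_R$. My goal is to pick, for each such component, whichever of the two matchings (restricted to that component) covers all the ``demanded'' vertices of $L'\cup R'$ in that component; gluing these choices over all components then yields a matching $N\subseteq M_L\cup M_R$, and $N$ clearly covers $L'\cup R'$ because every demanded vertex lies in some component and is handled there.

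The main thing to check is that within each component at least one of the two alternating restrictions suffices. For a cycle or an even-length path, both $M_L$ and $M_R$ cover exactly the same vertex set of that component, so either choice works. The delicate case is an odd-length path, say with endpoints $u$ and $w$: here the two matchings cover all internal vertices but differ at the two endpoints — one of $M_L,M_R$ covers $u$ but not $w$, the other covers $w$ but not $u$. So I must argue that we never have the bad situation where $u\in L'\cup R'$ is demanded but only coverable by $M_R$ while simultaneously $w\in L'\cup R'$ is demanded but only coverable by $M_L$. This is where bipartiteness enters: in an alternating path the two endpoints lie on opposite sides of the bipartition (the path has odd length, so $u$ and $w$ are on different sides), and a vertex in $L$ is covered precisely by $M_L$ within its component while a vertex in $R$ is covered precisely by $M_R$. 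Concretely, the endpoint not covered by $M_L$ must be an $L$-vertex that $M_L$ already covered elsewhere is impossible; rather, an endpoint missed by $M_L$ is one where the component's last edge is an $M_R$-edge, forcing that endpoint to be the $M_R$-side, and a short parity/bipartiteness check shows the $M_L$-missed endpoint lies in $R$ and the $M_R$-missed endpoint lies in $L$. Since $M_L$ covers all of $L'$ and $M_R$ covers all of $R'$, an endpoint in $R\setminus R'$ that $M_L$ fails to cover is simply not demanded, and symmetrically. Thus in every component one of the two restrictions covers all demanded vertices of that component.

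I expect the parity/bipartiteness bookkeeping for odd paths to be the only real obstacle; everything else (components of $M_L\triangle M_R$ are paths and cycles, alternation, reassembling a global matching from per-component choices) is routine. One clean way to organize the endpoint argument: orient each alternating path so we read edges as $M_L, M_R, M_L, \dots$; then the start vertex is $M_L$-covered, and the end vertex is $M_L$-covered iff the path has even length and $M_R$-covered iff it has odd length, with the analogous statement for the other orientation. Combining with the fact that an $L$-vertex's incident component edge is an $M_L$-edge whenever it is $M_L$-covered pins down which side each uncovered endpoint belongs to, and the hypotheses $L'\subseteq$ (vertices covered by $M_L$) and $R'\subseteq$ (vertices covered by $M_R$) finish the case analysis. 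Taking the union of the chosen restrictions over all components gives the desired matching of $L'\cup R'$.
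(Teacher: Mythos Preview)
Your strategy---decompose into alternating paths and cycles and choose, per component, whichever of $M_L,M_R$ covers the demanded vertices---is exactly the paper's. But your case analysis has the parities reversed. In an alternating path with an \emph{odd} number of edges, the matching containing the first and last edge covers \emph{every} vertex of the path, so that case is trivial; it is the \emph{even}-length alternating path in which each of $M_L$ and $M_R$ misses exactly one (different) endpoint, and that is the delicate case. Consequently your claim that ``the $M_L$-missed endpoint lies in $R$ and the $M_R$-missed endpoint lies in $L$'' is wrong: in the delicate (even-edge) case both endpoints lie on the \emph{same} side of the bipartition. The correct endgame is: the endpoint whose unique incident component-edge is in $M_R$ is globally $M_L$-unmatched and therefore not in $L'$; the other endpoint is globally $M_R$-unmatched and therefore not in $R'$. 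Since both endpoints lie on the same side---say $L$---the $M_L$-unmatched one is in neither $L'$ nor $R'$, so choosing $M_L$ on that component suffices. This is precisely what the paper's walk-along-the-path argument establishes (the paper's ``odd length'' means odd number of vertices, i.e.\ even number of edges).

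A smaller issue: working with $M_L\triangle M_R$ rather than $M_L\cup M_R$ turns any vertex matched by an edge of $M_L\cap M_R$ into an isolated point, which your per-component recipe then fails to cover. Either use $M_L\cup M_R$ as the paper does (edges in $M_L\cap M_R$ then form their own length-one components, handled trivially), or add $M_L\cap M_R$ back into the final matching.
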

\begin{proof}
  Suppose, $M_L$ and $M_R$ are matchings that matches $L'$ and $R'$, respectively.
  Let $G'\eqdef (L\cup R, M_L \cup M_R)$ be a subgraph of $G$. We
  construct a matching $M$ of $L'\cup R'$ as a matching in $G'$.
  Observe that $G'$ is a union of single nodes, paths and cycles and $M$ can be constructed in every strongly connected component independently.
  
  We claim that for any strongly connected component $C$ we can find a matching witch matches all elements in $C\cap L'$ and $C\cap R'$.
  This, if proved, ends the proof of Lemma~\ref{lem:twoMatchings}.
  
  First of all, every node in $L'\cup R'$ has a degree at least $1$ so the claim holds for single nodes immediately.
  
  If the strongly connected component is a cycle (in bipartite graph) or a path of an even length then there is a perfect matching in it so the claim holds as well.
  
  The case of paths of odd length is the most complicated one.
  Without loosing of generality, suppose that the first node $x$ is in $L'\cup R'$. Indeed, if it is not then we match all nodes except
  $x$ and this case is done. 
  
  Without loss of generality, we can assume that $x\in L'$, as $x\in R'$ is symmetric.
  We prove that the path has to end in vertex from the set $L\setminus L'$.
  Indeed path has to end in $L$ as it's length is odd. Furthermore, 
  consider a walk along the path starting from $x$; to every element of $L'\setminus \{x\}$ on the path $C$ we enter via an edge from $M_R$, but then we can leave it via 
  an edge from $M_L$ as from any vertex in $L'$ there is outgoing edge in $M_L$. Thus, the path can not end in the element from $L'$ and thus the last
  vertex has to belong to $L\setminus L'$. Now we match all vertices except of the last one.
\end{proof}

\begin{proof}[Proof of \cref{thm:histogram}]
If $H=\sum_{j=1}^nH_j$, where the $H_j$ are simple histograms over $\setS$,
then from $\sum_{\beta\in\setD}H(\alpha,\beta)=\sum_{j=1}^n\sum_{\beta\in\setD}H_j(\alpha,\beta)$
and because $\sum_{\beta\in\setD}H_j(\alpha,\beta)=1$ 
we derive that
$\sum_{\beta\in\setD}H(\alpha,\beta)= n$ for every
$\alpha\in\D$. In the same way the second histogram condition for $H$ follows
from those of the $H_j$. So $H$ is a histogram over $\setS$.

\smallskip
For the converse direction, the proof proceeds by induction on $n$,
the degree of the histogram $H$.
If $n=1$ then $H$ is simple and the claim trivially holds.
Suppose now that the claim holds for histograms of degree $n$ 
and consider a histogram $H:\setS\times\setD\rightarrow\setN$ of degree $n+1$.
We show how that there exists a simple histogram $X:\setS\x\D\to\N$ such that 

\begin{enumerate}
    \item $X(\alpha,\beta)\le H(\alpha,\beta)$ for every $\alpha\in
      \setS$, and every $\beta\in \setD$, and
    \item for every $\db\in\setD$ if $\sum_{\alpha\in\setS}H(\alpha,\beta)=n+1$ then
    $\sum_{\alpha\in\setS}X(\alpha,\beta) =1$.
\end{enumerate}

The first condition then guarantees that the function $Y:\setS\x\D\to\N$ with $Y=H-X$ is well defined
and satisfies
\begin{equation*}
\sum_{\beta\in \setD} Y(\alpha,\beta)=\left(\sum_{\beta\in \setD} H(\alpha,\beta)\right)-\left(\sum_{\beta\in \setD} X(\alpha,\beta)\right)=(n+1)-1=n \ \text{ for all }\alpha\in \setS. 
\end{equation*}
The second condition implies that $\sum_{\alpha\in \setS} Y(\alpha,\beta)\leq n$ for all $\beta\in\setD$.
Hence, $Y$ is a histogram of degree $n$ and the claim follows by induction hypothesis.

\medskip
To show the existence of a suitable simple histogram $X$, we 
consider now the bipartite graph $G$ where the sets of nodes are $\setS$
and $\setB\eqdef\support{H}$
and where there is an edge between $\alpha$ and $\beta$ whenever $H(\alpha,\beta)>0$
(We assume here w.l.o.g.~that $\setS$ and $\support{H}$ are disjoint; otherwise take $\setB$ as
some suitable duplication).
Moreover, let $\setT$ denote the set of those ``maximal'' data values $\beta$
where $\sum_{\alpha\in\D}H(\alpha,\beta)=n+1$.
Note that $\setT\subseteq\setB$.

We claim that the required simple histogram $X$ exists iff
there is a matching in the graph $G$ that matches both $\setS$ and $\setT$.
Indeed, any such histogram $X$ provides a matching
$M\eqdef\{(\alpha,\beta)\mid X(\alpha,\beta)=1\}$.
By the first histogram condition, $M$ matches all nodes in ${\setS}$;
and all nodes $\beta\in {\setT}$ are matched since $X$ 
must satisfy $\sum_{\alpha\in \setS}X(\alpha,\beta)=1$.
Conversely, for a given matching $M$ we define 
$X(\alpha,\beta) = 1$ if $(\alpha,\beta)\in M$ and $0$ otherwise.
It is easy to check that $X$ is a simple histogram that satisfies required properties.

\medskip
To finish the proof we show that a matching $M$ of $\setS\cup\setT$ exists. 
By \cref{lem:twoMatchings}, it suffices to find two matchings, one of $\setS$
and one of $\setT$.
In both cases, we will make use of Hall's marriage Theorem~\cite{book:Deistel}.
Writing $\neighbours{\setS}$ for the \emph{neighbourhood}
of a set $\setS$ of nodes (the set of nodes $v\notin \setS$ adjacent to some node in $\setS$),
this theorem states that
in a finite bipartite graph there is a matching of a set $\setS$ of nodes
if, and only if every subset $\setS'\subseteq \setS$
has at least as many neighbours as elements:
\begin{equation}
    \label{eq:hall}
    \card{\setS'} \le \card{\neighbours{\setS'}}.
\end{equation}

We start by proving the existence of a matching of $\setS$.
If we label the edges $(\alpha,\beta)$ in our graph by the respective values $H(\alpha,\beta)$,
then we observe that
the total weight of edges connecting any subset $\setS'\subseteq \setS$
is at most the total weight of edges connecting its neighbours:
$\sum_{\alpha\in \setS'}\sum_{\beta\in\setD} H(\alpha,\beta)
\le
\sum_{\alpha\in \setS}\sum_{\beta\in\neighbours{\setS'}} H(\alpha,\beta)$.
Consequently,
  
\begin{align*}
    \card{\setS'}\cdot (n+1)=\sum_{\substack{\alpha\in \setS'\\\beta\in \setD}} H(\alpha,\beta)
   &\leq \sum_{\substack{\alpha\in\setS\\\beta\in \neighbours{\setS'}}} H(\alpha,\beta)
   \le \sum_{\beta\in \neighbours{\setS'}} (n+1)
   = \card{\neighbours{\setS'}}\cdot(n+1).
  \end{align*}
  The first equality is due to the first histogram condition and
  the second inequality is by the second histogram condition.
  So all subsets $\setS'\subseteq\setS$ satisfy \cref{eq:hall},
  so Hall's theorem applies and there exists a matching of $\setS$.

  The proof that a matching of $\setT$ exists follows the same pattern.
  For any subset $\setT'\subseteq \setT$ we get
  $$\card{\setT'}\cdot (n+1)=
  \sum_{\beta\in \setT'} \sum_{\alpha\in \setS} H(\alpha,\beta)
  \leq
  \sum_{\alpha\in \neighbours{\setT'}} \sum_{\beta\in \setD} H(\alpha,\beta)=
  \card{\neighbours{\setT'}}\cdot (n+1),
  $$
  where the first equality holds by the definition of $\setT$.
  So $\setT$ satisfies the assumption of Hall's theorem and we conclude
  that some matching of $\setT$ exists, as required.
\end{proof}

\section{Expressibility}\label{sec:naturals}
In this section, we show that histograms provide a natural tool for
deciding if a data vector is a \pproduct\  of others.
We first establish the connection between histograms and \pproducts,
and then (in \cref{thm:boundedsupport}) that \pproducts\  can be represented by
histograms with bounded support sets.
Finally, we derive an \NP\ complexity upper bound for the Expressibility problem
for vectors over monoids $(\Z^d,+)$.
\medskip

\noindent
Given a data vector $\vec{v}$ and a histogram $H$ over $\setS\eqdef \support{\vec{v}}$ we define the vector $\eval{\vec{v},H}$ by
$$
 \eval{\vec{v},H}(\db)\eqdef
\sum_{\da\in \setS}\vec{v}(\da)H(\da,\db).
$$
Observe that $\mathit{eval}$ is a homomorphism in the sense that
for any vector $\vec{v}$ and histograms $H_1,H_2$ over $\setS$ it holds that
\begin{equation}
    \label{eq:eval-hom}
\eval{\vec{v},H_1+H_2}=\eval{\vec{v},H_1}+\eval{\vec{v},H_2}.
\end{equation}

\noindent
Recall that by \cref{lem:inj2perm}, \pproducts\  are of the form
$\vec{x}=\sum_{i=1}^n\vec{v}_i\circ\pi_i^{-1}$ where $\pi_i:\support{\vec{v}_i}\to\D$
are injections.
We now associate each injective function $\pi:\setS\to\D$ with the
simple histogram $H_\pi$ over $\setS$ defined as
$$H_\pi(\alpha,\beta)\eqdef
\begin{cases}
  1 & \text{ if }\beta=\pi(\alpha)\\
  0 & \text{ otherwise}
\end{cases}
$$
Notice that conversely, each simple histogram $H:\setS\x\D\to\D$
provides a unique injection $\pi_H:\setS\to\D$ satisfying $H(\alpha,\pi_H(\alpha))=1$ for every $\alpha\in \setS$.
So, $H_{\pi_H}=H$.
The next lemma makes the connection between histograms and \pproducts\ .

\begin{lemma}
    \label{lem:inj-hist}
    \label{lem:perm-hist}
    Let $\vec{v}$ be a vector 
    and $\pi:\support{\vec{v}}\to\D$ injective.
    Then
    $\vec{v}\circ\pi^{-1} = \eval{\vec{v},H_\pi}$.
\end{lemma}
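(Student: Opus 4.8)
The statement $\vec{v}\circ\pi^{-1} = \eval{\vec{v},H_\pi}$ is an equality of data vectors, so the plan is simply to fix an arbitrary $\beta\in\setD$ and check that both sides agree at $\beta$, splitting on whether $\beta$ lies in the range of $\pi$ or not. Write $\setS\eqdef\support{\vec{v}}$.

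First I would unfold the right-hand side: by definition $\eval{\vec{v},H_\pi}(\beta)=\sum_{\alpha\in\setS}\vec{v}(\alpha)H_\pi(\alpha,\beta)$, and since $H_\pi(\alpha,\beta)\in\{0,1\}$ with $H_\pi(\alpha,\beta)=1$ iff $\beta=\pi(\alpha)$, this sum collapses: because $\pi$ is injective there is at most one $\alpha\in\setS$ with $\pi(\alpha)=\beta$. So if $\beta=\pi(\alpha)$ for some (necessarily unique) $\alpha\in\setS$, the sum equals $\vec{v}(\alpha)$, and if $\beta$ is not in the range of $\pi$ restricted to $\setS$ then every term is $0\cdot$ something, giving $0$ (using that $0$ is absorbing for scalar multiplication by naturals in the monoid, i.e.\ $m\cdot 0 = 0$). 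In the second case I should also note that if $\beta=\pi(\alpha)$ for some $\alpha\notin\setS$, then $\vec{v}(\alpha)=0$ anyway, so the value is still $0$; equivalently one just observes $H_\pi$ has entries indexed by $\setS$ only.

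Next I would unfold the left-hand side using the definition of $\vec{v}\circ\pi^{-1}$ given just before \cref{lem:inj2perm}: $(\vec{v}\circ\pi^{-1})(\beta)=\vec{v}(\alpha)$ if $\pi(\alpha)=\beta$ (for $\alpha$ in the domain $\setS$ of $\pi$), and $0$ if $\beta$ is not in the range of $\pi$. Comparing the two cases with those obtained for the right-hand side shows the values coincide in each case, which establishes the equality. Since $\beta$ was arbitrary, the two data vectors are equal.

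There is essentially no obstacle here; the only thing to be careful about is the bookkeeping around the domain of $\pi$ (it is $\setS$, not all of $\D$) and the matching of the two case splits — making sure the "not in range" case of $\pi^{-1}$ lines up with the "empty sum" case of $\eval{\vec{v},H_\pi}$. The lemma is really just spelling out that the simple histogram $H_\pi$ is the "matrix" of the partial renaming $\pi^{-1}$, so that $\mathit{eval}$ against it reproduces composition with $\pi^{-1}$.
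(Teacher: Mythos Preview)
Your proposal is correct and follows essentially the same approach as the paper: fix $\beta\in\setD$, split on whether $\beta$ is in the range of $\pi$, and unfold both sides in each case. The paper's proof is slightly terser (it does not explicitly invoke injectivity of $\pi$ for uniqueness, relying instead on the fact that $H_\pi$ is simple), but the structure and content are the same.
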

\begin{proof}
If $\pi(\alpha)=\beta$ then
$\vec{v}\circ\pi^{-1}(\beta) = \vec{v}(\alpha)
=\sum_{\alpha'\in\setS}\vec{v}(\alpha')H(\alpha',\beta) = \eval{\vec{v},H}(\beta)
$
where the second equation holds because $H$ is simple.
If $\pi(\alpha)\neq\beta$ for all $\alpha$ then
$\vec{v}\circ\pi^{-1}(\beta) = 0 =\sum_{\alpha\in\setS}\vec{v}(\alpha)\cdot 0
=\eval{\vec{v},H}(\beta).
$
\end{proof}

\begin{lemma}\label{lem:single}
    Let $\vec{v}$ be a vector. A vector $\vec{x}$ is a \pproduct\  of
  $\{\vec{v}\}$ if, and only if, there exists a histogram $H$ over $\support{v}$
  such that
  $\vec{x}=\eval{\vec{v},H}$.
\end{lemma}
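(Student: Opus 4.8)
The plan is to prove the two implications separately, using \cref{lem:inj2perm}, \cref{lem:perm-hist}, \cref{thm:histogram}, and the homomorphism property~\eqref{eq:eval-hom} of $\mathit{eval}$ as the main ingredients.

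For the ``only if'' direction, suppose $\vec{x}$ is a \pproduct\ of $\{\vec v\}$, so by \cref{lem:inj2perm} we may write $\vec{x}=\sum_{i=1}^{n}\vec{v}\circ\pi_i^{-1}$ for injections $\pi_i:\support{\vec v}\to\D$. By \cref{lem:perm-hist} each summand equals $\eval{\vec v,H_{\pi_i}}$, where $H_{\pi_i}$ is a simple histogram over $\support{\vec v}$. Setting $H\eqdef\sum_{i=1}^{n}H_{\pi_i}$, which is a histogram of degree $n$ over $\support{\vec v}$ by the easy direction of \cref{thm:histogram}, repeated use of~\eqref{eq:eval-hom} gives $\eval{\vec v,H}=\sum_{i=1}^{n}\eval{\vec v,H_{\pi_i}}=\vec{x}$. (The case $n=0$ is handled by taking $H$ to be the zero histogram of degree $0$, for which $\eval{\vec v,H}=\vec 0=\vec x$.)

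Conversely, suppose $\vec{x}=\eval{\vec v,H}$ for some histogram $H$ over $\support{\vec v}$, say of degree $n$. By \cref{thm:histogram}, $H=\sum_{j=1}^{n}H_j$ for simple histograms $H_j$ over $\support{\vec v}$. Each $H_j$ is of the form $H_{\pi_j}$ for the injection $\pi_j\eqdef\pi_{H_j}:\support{\vec v}\to\D$ it determines, so by \cref{lem:perm-hist} we have $\eval{\vec v,H_j}=\vec v\circ\pi_j^{-1}$. Applying~\eqref{eq:eval-hom} once more, $\vec{x}=\eval{\vec v,H}=\sum_{j=1}^{n}\eval{\vec v,H_j}=\sum_{j=1}^{n}\vec v\circ\pi_j^{-1}$, and by \cref{lem:inj2perm} each $\vec v\circ\pi_j^{-1}$ equals $\vec v\circ\theta_j$ for a suitable permutation $\theta_j$; hence $\vec{x}$ is a \pproduct\ of $\{\vec v\}$ (again, the degenerate case $n=0$ forces $H$ and hence $\vec x$ to be zero, which is the empty \pproduct).

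I do not expect any genuine obstacle here: the lemma is essentially a bookkeeping statement that bundles together the correspondence between injections and simple histograms (\cref{lem:perm-hist}), the additive decomposition of arbitrary histograms into simple ones (\cref{thm:histogram}), and the additivity of $\mathit{eval}$. The only points requiring a little care are making sure all histograms involved share the signature $\setS=\support{\vec v}$ (so that sums are well defined and \eqref{eq:eval-hom} applies), and treating the empty-sum case $n=0$ explicitly so that the equivalence is literally correct.
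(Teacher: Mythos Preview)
Your proposal is correct and follows essentially the same route as the paper's own proof: both directions use \cref{lem:inj2perm} to pass between permutations and injections, \cref{lem:perm-hist} to identify $\vec v\circ\pi^{-1}$ with $\eval{\vec v,H_\pi}$, \cref{thm:histogram} for the decomposition of $H$ into simple histograms, and the additivity~\eqref{eq:eval-hom} to glue the pieces. Your explicit treatment of the $n=0$ case and the remark that the summed histograms share the signature $\support{\vec v}$ are welcome clarifications but do not change the argument.
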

\begin{proof}
  Assume first that $\vec{x}$ is a \pproduct\  of
  $\{\vec{v}\}$. Then there are
  permutations
  $\theta_1,\ldots,\theta_n$
  such that
  $\vec{x}=\sum_{j=1}^n\vec{v}\circ\theta_j$.
  By \cref{lem:inj2perm},
  $\vec{x}=\sum_{j=1}^n\vec{v}\circ\pi_j^{-1}$
  for injections  $\pi_j:\support{\vec{v}}\to\D$.
  From \cref{lem:perm-hist} we derive $\vec{x}=\sum_{j=1}^n\eval{\vec{v},H_{\pi_j}}$
  and by \cref{eq:eval-hom}
  the histogram $H\eqdef\sum_{j=1}^nH_{\pi_j}$
  satisfies
  $\vec{x}=\eval{\vec{v},H}$.

  \smallskip
  Conversely, let us assume that there exists an histogram $H$ over
  $\support{v}$ such that $\vec{x}=\eval{\vec{v},H}$.
  \cref{thm:histogram} shows that
  $H$ can be decomposed as $H=\sum_{j=1}^nH_j$ where $H_j$ are
  simple histograms over $\support{\vec{v}}$. From \cref{eq:eval-hom}
  it follows that $\vec{x}=\sum_{j=1}^n\eval{\vec{v},H_j}$,
  and since all $H_j$ are simple,
  there are injections $\pi_1,\ldots,\pi_n$
  with $H_j=H_{\pi_j}$.
  The claim thus follows by \cref{lem:inj-hist} and \cref{lem:inj2perm}.
\end{proof}

We now show how to bound the supports of histograms $H_{\vec{v}}$ such
that $\vec{x}=\sum_{\vec{v}\in V}\eval{\vec{v},H}$ with respect to
$\vec{x}$ and $V$.

  \begin{theorem}\label{thm:boundedsupport}
    If $\vec{x}$ is a \pproduct\  of $V$
    then
    $\vec{x}=\sum_{\vec{v}\in V}\eval{\vec{v},H_{\vec{v}}}$
    where for each vector $\vec{v}\in V$, $H_\vec{v}$ is a histogram
    over $\support{\vec{v}}$, and
    $\card{\bigcup_{\vec{v}\in V}\support{H_{\vec{v}}}}$
    is bounded by $\card{\support{\vec{x}}}+1+\sum_{\vec{v}\in V}(2\card{\support{\vec{v}}}-1)$.
  \end{theorem}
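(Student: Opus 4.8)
The plan is to argue that we can restrict the range of each histogram $H_{\vec v}$ so that together they use only a bounded number of data values in their supports. Starting from a witnessing decomposition $\vec x=\sum_{\vec v\in V}\eval{\vec v,H_{\vec v}}$ (which exists by \cref{lem:single} applied summand by summand, after grouping repeated vectors), I classify the data values $\beta\in\bigcup_{\vec v}\support{H_{\vec v}}$ into two kinds: those that lie in $\support{\vec x}$, and those that do not. The first kind contributes at most $\card{\support{\vec x}}$ values, so the whole difficulty is to bound the number of ``internal'' data values $\beta\notin\support{\vec x}$ where some histogram is nonzero. The key observation is that at such a $\beta$ the cancellation $\sum_{\vec v\in V}\eval{\vec v,H_{\vec v}}(\beta)=0$ must hold in $M$, and that we are free to permute these internal columns and merge them, as long as the histogram conditions (row sums and column-sum bounds from \cref{def:histogram}) are preserved.

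**Next I would** set up an exchange/merging argument on the internal columns. Fix the domain $\setS_0\eqdef\bigcup_{\vec v\in V}\support{\vec v}$, of size at most $\sum_{\vec v\in V}\card{\support{\vec v}}$. For an internal data value $\beta$, the tuple $\big(H_{\vec v}(\cdot,\beta)\big)_{\vec v\in V}$ is a function $\setS_0\to\N$; call such a tuple a \emph{column profile}. If two internal data values $\beta,\beta'$ have the property that their column profiles can be ``combined'' — i.e., adding the profile of $\beta'$ into that of $\beta$ and zeroing out $\beta'$ still yields valid histograms — then we can eliminate $\beta'$. The constraint that prevents unlimited merging is condition (2) of \cref{def:histogram}: for each $\vec v$ and each $\beta$, $\sum_{\alpha\in\support{\vec v}} H_{\vec v}(\alpha,\beta)\le n_{\vec v}$ where $n_{\vec v}$ is the degree of $H_{\vec v}$. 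However, internal columns can always be ``spread out'' freely: since $\beta\notin\support{\vec x}$, permuting $\beta$ to a fresh data value changes nothing. The real bound comes from a linear-algebra / extreme-point style argument: think of the internal part of each $H_{\vec v}$ as a nonnegative integer combination $\sum_\beta c_\beta \cdot(\text{unit column at }\beta)$ of ``column vectors'', subject to the row-sum equations; a vertex solution of the associated system uses only boundedly many distinct nonzero columns, and the number of binding row-sum equations is $\sum_{\vec v}(\card{\support{\vec v}})$, while one extra degree of freedom accounts for the ``$+1$''. I expect the precise accounting to yield $\sum_{\vec v\in V}(2\card{\support{\vec v}}-1)$ internal values: roughly $\card{\support{\vec v}}$ from row-sum constraints plus $\card{\support{\vec v}}-1$ from the column-sum inequalities being tight, per base vector.

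**The main obstacle** will be making the merging step rigorous over an arbitrary monoid $M$: we cannot invoke Carathéodory or vertex-of-polytope arguments directly over $M$, since $M$ need not embed into a vector space. The way around this is to keep the combinatorial part (which histograms/injections are used, how columns are merged) purely over $\N$ — it only involves the coefficients $H_{\vec v}(\alpha,\beta)$, which are natural numbers — and to let the monoid values $\vec v(\alpha)$ ride along passively. Concretely, I would argue: whenever there are ``too many'' internal data values, two of them, say $\beta$ and $\beta'$, must have column profiles that are comparable or combinable in a way that lets us move all the mass of $\beta'$ onto $\beta$ (or onto a fresh value) while keeping every $H_{\vec v}$ a histogram and keeping $\eval{\vec v, H_{\vec v}}$ unchanged on $\support{\vec x}$ — because the only thing that can change is the value at internal columns, which is $0$ anyway by the cancellation constraint. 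Iterating this removes internal columns one at a time until the count drops to the stated bound. I would formalize ``too many'' via a counting argument: each internal column must be ``used'' by at least one base vector, and the histogram conditions limit, for each base vector, how many genuinely distinct internal columns are forced, giving the $2\card{\support{\vec v}}-1$ term; the extra $+1$ handles a boundary case in the merging (one column may need to be kept as a ``remainder'').

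**Finally,** once the support bound is established, the stated inequality follows by summing: $\card{\support{\vec x}}$ external values, plus at most $\sum_{\vec v\in V}(2\card{\support{\vec v}}-1)$ internal ones, plus the single extra value from the merging remainder, giving exactly the claimed bound $\card{\support{\vec x}}+1+\sum_{\vec v\in V}(2\card{\support{\vec v}}-1)$. The homomorphism property \cref{eq:eval-hom} and \cref{lem:single} guarantee that after all the column manipulations we still have genuine histograms $H_{\vec v}$ over $\support{\vec v}$ witnessing $\vec x = \sum_{\vec v\in V}\eval{\vec v,H_{\vec v}}$, which is what is required.
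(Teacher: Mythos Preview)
Your overall shape is right --- start from a witnessing family $(H_{\vec v})_{\vec v\in V}$, separate data values in $\support{\vec x}$ from ``internal'' ones, and merge internal columns until the combined support is small --- and this is exactly the architecture of the paper's proof. But the one load-bearing step is missing: you never say \emph{why} two internal columns can always be merged once there are too many. You acknowledge that the obstruction is condition~(2) of \cref{def:histogram} (the column-sum bound $\sum_\alpha H_{\vec v}(\alpha,\beta)\le n_{\vec v}$), and then gesture at a vertex/Carath\'eodory argument, correctly note it fails over general monoids, and fall back to ``two of them must have column profiles that are comparable or combinable''. That assertion is the whole difficulty, and your accounting (``roughly $\card{\support{\vec v}}$ from row-sum constraints plus $\card{\support{\vec v}}-1$ from tight column-sum inequalities'') is not a mechanism that produces it.

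The paper's missing idea is simple and purely combinatorial: call $\beta$ \emph{big for $\vec v$} if $\sum_{\alpha\in\support{\vec v}} H_{\vec v}(\alpha,\beta) > n_{\vec v}/2$. Since the total mass of $H_{\vec v}$ is $\card{\support{\vec v}}\cdot n_{\vec v}$ and each big column carries strictly more than $n_{\vec v}/2$, there are at most $2\card{\support{\vec v}}-1$ big columns per base vector. Now if the combined support exceeds $\card{\support{\vec x}} + 1 + \sum_{\vec v}(2\card{\support{\vec v}}-1)$, pigeonhole gives two internal values $\beta_1,\beta_2$ that are big for no $\vec v$; then for every $\vec v$ the two column sums are each at most $n_{\vec v}/2$, so their sum is at most $n_{\vec v}$, and merging $\beta_2$ into $\beta_1$ preserves the histogram conditions. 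The ``$+1$'' is just so that pigeonhole yields \emph{two} such values rather than one. This is the concrete criterion you need to replace your hand-wave, and it is what makes the bound $2\card{\support{\vec v}}-1$ appear.
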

  \begin{proof}
    Any \pproduct\  
    of $V$
    is a sum
    $\vec{x}=\sum_{\vec{v}\in V}\vec{x_v}$, where each $\vec{x_v}$ is a \pproduct\  of $\{\vec{v}\}$.
    So the existence of histograms $H_\vec{v}$ with $\vec{x_v}=\eval{\vec{v},H_\vec{v}}$ is guaranteed by \cref{lem:single}.

    For each histogram $H_\vec{v}$ we write $n_\vec{v}$ for its degree
    and define the set of \emph{big} data values as
    $$\setB_{\vec{v}}\eqdef
    \left\{\db\in\setD \mid
    \sum_{\da\in\support{\vec{v}}}H_{\vec{v}}(\da,\db)> \frac{n_{\vec{v}}}{2}\right\}.$$

    \noindent
    We can estimate the cardinality of $\setB_\vec{v}$ by
    $|\setB_{\vec{v}}|\leq 2|\support{\vec{v}}|-1$.
    Indeed, if $\setB_{\vec{v}}$ is empty, the property is immediate. Otherwise, we have 
    $\sum_{\db\in \setB_{\vec{v}}}\sum_{\da\in
      \support{\vec{v}}}H_{\vec{v}}(\da,\db)> |\setB_{\vec{v}}|\frac{n_{\vec{v}}}{2}$.
    We also have $\sum_{\db\in \setB_{\vec{v}}}\sum_{\da\in
      \support{\vec{v}}}H_{\vec{v}}(\da,\db)\leq \sum_{\da\in
      \support{\vec{v}}}\sum_{\db\in \setD}H_{\vec{v}}(\da,\db)=
     \card{\support{\vec{v}}} \cdot n_{\vec{v}}$.
    Therefore, $|\setB_{\vec{v}}|\leq 2|\support{\vec{v}}|-1$ holds.

    \medskip
    To provide the bound claimed in the theorem,
    suppose that the histograms $H_\vec{v}$ are chosen such that
    their combined support $\setT\eqdef\bigcup_{\vec{v}\in V}\support{H_\vec{v}}$ is minimal.
    We show that this set cannot have more than
    $\card{\support{\vec{x}}}+1+\sum_{\vec{v}\in V}(2\card{\support{\vec{v}}}-1)$
    elements, which implies the claim.

    Suppose towards a contradiction that $\card{\setT}$ exceeds this bound.
    Then it must contain two distinct elements
    $\db_1$ and $\db_2$
    that are both not in
    $\bigcup_{\vec{v}\in V}\setB_{\vec{v}}$
    nor in $\support{\vec{x}}$.
    Notice that the first condition, that $\beta_1,\beta_2$ are not big in any histogram
    $H_\vec{v}$ guarantees that
    \begin{equation}
        \label{eq:notbig}
    \sum_{\alpha\in \support{\vec{v}}}H_\vec{v}(\alpha,\beta_1)
    + \sum_{\alpha\in \support{\vec{v}}}H_\vec{v}(\alpha,\beta_2) \le n_\vec{v} \ \ \ \ \   \text{ for all } \vec{v}\in V.
    \end{equation}
    Based on $\beta_1$ and $\beta_2$ we introduce, for each $\vec{v}\in V$,
    the function
    $F_{\vec{v}}:\support{\vec{v}}\x\D\to\N$ as
    $$F_{\vec{v}}(\da,\db)=
    \begin{cases}
      H_{\vec{v}}(\da,\db_1)+H_{\vec{v}}(\da,\db_2) &\text{ if
      }\db=\db_1\\
      0 & \text{ if }\db=\db_2\\
      H_{\vec{v}}(\da,\db) & \text{ otherwise.}
    \end{cases}$$
    Then for any $\alpha\in\support{v}$ we have
    $\sum_{\beta\in \D}F_\vec{v}(\alpha,\beta) = n_\vec{v}$ and moreover, by
    \cref{eq:notbig}, we have
    $\sum_{\alpha\in \support{\vec{v}}}F_\vec{v}(\alpha,\beta_1) \le n_\vec{v}$.
    So $F_\vec{v}$ is a histogram over $\support{\vec{v}}$ of degree $n_\vec{v}$.
    We claim that
    $$\sum_{\vec{v}\in V}   \eval{\vec{v},F_{\vec{v}}}=\sum_{\vec{v}\in V}\eval{\vec{v},H_{\vec{v}}}.$$ 
    Indeed, $F_{\vec{v}}$ trivially satisfies
    $\eval{\vec{v},H_{\vec{v}}}=\eval{\vec{v},F_{\vec{v}}}$ for all data except of $\db_1$ and $\db_2$.
    Thus 
    $$\vec{x}(\db)=\left(\sum_{\vec{v}\in V} \vec{x_{\vec{v}}}\right)(\db)=\left(\sum_{\vec{v}\in V} \eval{\vec{v},F_{\vec{v}}}\right)(\db)\text{ for all }\db\not\in\{ \db_1,\db_2\}.$$
    Moreover, $\db_2\not\in \support{\vec{x}}$ so $\vec{x}(\db_2)=0$. On the other hand $\left(\sum_{\vec{v}\in V} \eval{\vec{v},F_{\vec{v}}}\right)(\db_2)=0$ as 
    for every $\vec{v}$ it holds that $\eval{\vec{v},F_{\vec{v}}}(\db_2)=0$.
    Finally, $\db_1\not\in \support{\vec{x}}$ so $\vec{x}(\db_1)=0$. On the other hand 
    \begin{equation*}
    \begin{split}
    \left(\sum_{\vec{v}\in V} \eval{\vec{v},F_{\vec{v}}}\right)(\db_1)=\left(\sum_{\vec{v}\in V} \eval{\vec{v},H_{\vec{v}}}\right)(\db_1)+
    \left(\sum_{\vec{v}\in V} \eval{\vec{v},H_{\vec{v}}}\right)(\db_2)\\
    =\left(\sum_{\vec{v}\in V} \vec{x_{\vec{v}}}\right)(\db_1)+\left(\sum_{\vec{v}\in V} \vec{x_{\vec{v}}}\right)(\db_2)=\vec{x}(\db_1)+\vec{x}(\db_2)=0+0=0.
    \end{split}
    \end{equation*}

   We conclude that $\vec{x}=\sum_{\vec{v}\in V}\eval{\vec{v},F_{\vec{v}}}$.
   But this contradicts
   the minimality of $\card{\setT}$ as it strictly includes
   $\bigcup_{\vec{v}\in V}\support{F_{\vec{v}}} = \setT\setminus\{\beta_2\}$.
  \end{proof}

\begin{corollary}
    \label{cor:expr-in-NP}
    The Expressibility problem for data vectors with values in $(\Z^d,+)$ is \NP-complete.
\end{corollary}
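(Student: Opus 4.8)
The plan is to use \cref{thm:boundedsupport} to turn Expressibility over $(\Z^d,+)$ into the feasibility of a polynomially sized integer linear program, and to get the matching lower bound from integer linear programming itself.

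For membership in \NP, I would first combine \cref{lem:single} with \cref{thm:boundedsupport} (noting, as in the proof of \cref{thm:boundedsupport}, that a \pproduct\ of $V$ is exactly a sum of \pproducts\ of the individual $\vec v\in V$): $\vec x$ is a \pproduct\ of $V$ iff there are histograms $H_\vec{v}$ over $\support{\vec v}$, one per $\vec v\in V$, with $\vec x=\sum_{\vec v\in V}\eval{\vec v,H_\vec{v}}$ and $\bigcup_{\vec v}\support{H_\vec{v}}$ of size at most $N\eqdef\card{\support{\vec x}}+1+\sum_{\vec v\in V}(2\card{\support{\vec v}}-1)$. Since $N$ is polynomial in the input, fix any $T\subseteq\D$ with $\support{\vec x}\subseteq T$ and $\card{T}=N$; because $(\Z^d,+)$ is insensitive to the identity of data values, such a witness can be pushed by a permutation fixing $\support{\vec x}$ pointwise to one whose histograms are all supported inside $T$. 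The existence of histograms $H_\vec{v}\colon\support{\vec v}\times T\to\N$ with $\vec x=\sum_{\vec v}\eval{\vec v,H_\vec{v}}$ is then a finite system of linear equations and inequalities over $\Z$ in the $\sum_{\vec v}\card{\support{\vec v}}\cdot\card{T}$ nonnegative unknowns $H_\vec{v}(\alpha,\beta)$: the two histogram conditions for each $\vec v$ together with the $d\cdot\card{T}$ scalar equations $\vec x(\beta)=\sum_{\vec v}\sum_{\alpha}\vec v(\alpha)H_\vec{v}(\alpha,\beta)$. All coefficients are entries of the input vectors, hence of polynomial bit-size, and there are polynomially many unknowns, so feasibility is in \NP\ by the classical small-solution bound for integer linear programs.

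For \NP-hardness I would reduce from the feasibility of $\{c\in\N^n\mid \sum_{i=1}^n c_ia_i=b\}$ for given $a_1,\dots,a_n,b\in\Z^d$, a standard reformulation of integer linear programming and thus \NP-complete. Fix one data value $\alpha_0\in\D$, let $\vec{v_i}$ have $\support{\vec{v_i}}=\{\alpha_0\}$ and $\vec{v_i}(\alpha_0)=a_i$, and let $\vec x$ have $\vec x(\alpha_0)=b$ and $\vec x(\beta)=0$ otherwise. A histogram over the singleton $\{\alpha_0\}$ is just a finitely supported map $g\colon\D\to\N$ (the column condition is automatic), and $\eval{\vec{v_i},H}(\beta)=a_i\,g(\beta)$; so by \cref{lem:single} (applied summand-wise as above), $\vec x$ is a \pproduct\ of $\{\vec{v_1},\dots,\vec{v_n}\}$ iff there are finitely supported $g_i\colon\D\to\N$ with $\sum_i a_ig_i(\beta)=\vec x(\beta)$ for all $\beta$. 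Any mass placed at $\beta\neq\alpha_0$ may be set to $0$ since it only affects data values outside $\support{\vec x}$, so this holds iff $c_i\eqdef g_i(\alpha_0)$ satisfy $\sum_i c_ia_i=b$, i.e.\ iff the integer program is feasible. The reduction is plainly polynomial, giving \NP-hardness.

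The only genuinely delicate point is the \NP-membership half: one must check that bounding histogram supports via \cref{thm:boundedsupport} really produces a polynomially sized linear system, which in turn hinges on fixing a polynomial-size codomain $T$ without losing solutions — legitimate precisely because the target monoid is name-oblivious — after which the result is just the folklore fact that integer linear feasibility lies in \NP.
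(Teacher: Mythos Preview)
Your proposal is correct and follows the same approach as the paper: the upper bound via \cref{thm:boundedsupport} and \cref{lem:single} to obtain a polynomial-size integer linear system, and the lower bound via integer linear programming feasibility. You are more explicit than the paper in two places---you spell out how to fix a concrete finite codomain $T$ by permuting fresh data values (the paper leaves this implicit), and you give an explicit singleton-support reduction for hardness where the paper simply remarks that the singleton-domain case is ILP feasibility---but these are elaborations of the same argument, not a different route. One tiny omission: your count of unknowns should also include one degree variable $n_{\vec v}$ per $\vec v\in V$, since the histogram conditions are linear in $H_{\vec v}$ and $n_{\vec v}$ jointly; this does not affect the polynomial bound.
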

\begin{proof}
    We show the upper bound only, as a matching lower bound holds already for singleton domains $\D$,
    where the problem is equivalent to the feasibility of integer linear programs.

    By \cref{thm:boundedsupport}, positive instances imply the existence of
    histograms $H_{\vec{v}}$ over $\support{\vec{v}}$, with polynomially bounded
    support, 
    with $\vec{x}=\sum_{\vec{v}\in V}\eval{\vec{v},H_\vec{v}}$.
    By \cref{lem:single}, the existence of such histograms is also a sufficient condition
    for $\vec{x}$ to be a \pproduct\  of $V$.
    
    Due to the bound from \cref{thm:boundedsupport}, the histogram conditions
    as well as the condition that $\vec{x}=\sum_{\vec{v\in V}}\eval{\vec{v},H_\vec{v}}$
    can be expressed as a system of linear constraints with polynomially
    many inequalities and unknowns,
    which has a non-negative integer solution iff these conditions are satisfied.
    The claim thus follows from standard results for integer linear programming.
\end{proof}

\section{Reversibility}\label{sec:modules}
In this section we consider data vectors $\vec{v}:\setD\rightarrow G$ where
$(G,+)$ is a commutative group.
In this case the set of all vectors is itself a commutative group with identity $\vec{0}$.
We write $-\vec{v}$ for the inverse of vector $\vec{v}$ and $\vec{v}-\vec{w}\eqdef \vec{v}+(-\vec{w})$.

\begin{definition}
    \label{def:reversible}
    A vector $\vec{x}:\D\to G$ is \emph{reversible in} $V$ if both $\vec{x}$ and $-\vec{x}$ are \pproducts\  of $V$.
    A set of vectors $V$ is reversible if every vector $\vec{v}\in V$ is reversible in $V$.
 \end{definition}

We will provide in this section a way to reduce the Expressibility problem
to the membership problems in finitely generated subgroups of $(G,+)$,
assuming the given set of data vectors is reversible.
This result stated as \cref{thm:modules}.
We also show (as \cref{thm:reversibility}),
that checking the reversibility condition amounts to solving a finite linear system over $(G,+)$.

\medskip
Our constructions are based on the homomorphism \emph{weight}, which projects data vectors into
the underlying group: the weight of a vector $\vec{v}:\D\to G$ is the element of $G$ defined as
    $$\weight{\vec{x}}\eqdef\sum_{\da\in\setD}\vec{x}(\da).$$
    For a set $V$ of data vectors define $\weight{V}\eqdef\{\weight{\vec{v}} \mid \vec{v}\in V\}$.

In addition we introduce some useful notation. 
Fix any total order on $\setD$.
The \emph{rotation} of a finite set $\setS\subseteq \D$ 
is the permutation
$\cycshift{\setS}:\D\to\D$
defined as 
$$
    \cycshift{\setS}(\alpha) \eqdef
    \begin{cases}
        \min\{\setS\}, &\mbox{ if } \alpha=\max\{\setS\}\\
        \min\{\beta\in \setS\mid \beta >\alpha\}, &\mbox{ if } \max\{\setS\}\neq\alpha\in \setS\\
        \alpha, &\mbox{ if } \alpha\notin \setS.
    \end{cases}
$$

This allows to express for instance the vector $\vec{v}\circ\cycshift{\{\alpha,\beta\}}$,
which results from the vector $\vec{v}$ by exchanging the values of $\alpha$ and
$\beta$.
Clearly, for any finite set $\setS\subseteq\D$, 
the $\card{\setS}$-fold composition of
$\cycshift{\setS}$ with itself is the identity on $\D$. 

Finally, we introduce vectors $\RS{\setS}{v}:\D\to\Z^d$ as
$$
    \RS{\setS}{v} \eqdef \sum_{i=0}^{|\setS|-1} \vec{v} \circ \cycshift{\setS}^i
$$
where $\vec{v}$ is a data vector, $\support{\vec{v}}\subseteq \setS\subseteq\D$ is finite, and the superscripts denote $i$-fold iteration.
It is the result of summing up all different $\setS$-rotations of $\vec{v}$.
This vector is useful because it is a \pproduct\  of $\vec{v}$
that ``equalizes'' all values for $\alpha\in \setS$ to $\weight{\vec{v}}$, as stated in the
proposition below.

\begin{proposition}
    \label{lem:equalize}
    Let $\vec{v}:\D\to G$ be a data vector
    and $\setS\subseteq \D$ finite such that $\support{\vec{v}}\subseteq \setS$.
    \begin{enumerate}
        \item $\RS{\setS}{v}$ is a \pproduct\  of $\{\vec{v}\}$.
        \item 
            $\RS{\setS}{v}(\alpha) = \weight{\vec{v}}$ if $\alpha\in \setS$
            and
            $\RS{\setS}{v}(\alpha) = 0$ if $\alpha\notin \setS$.
    \end{enumerate}
\end{proposition}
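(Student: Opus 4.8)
The plan is to treat the two claims separately; the first is essentially a restatement of the definition of a \pproduct, and the second rests on one small observation about the cyclic structure of the rotation permutation.

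For claim~1, recall that by definition $\RS{\setS}{v} = \sum_{i=0}^{\card{\setS}-1}\vec{v}\circ\cycshift{\setS}^i$. Each $\cycshift{\setS}^i$ is an $i$-fold composition of the permutation $\cycshift{\setS}:\D\to\D$ with itself, hence itself a permutation of $\D$. Thus $\RS{\setS}{v}$ is a finite sum of vectors of the form $\vec{v}\circ\theta$ for permutations $\theta$, which is exactly what it means for it to be a \pproduct\ of the singleton set $\{\vec{v}\}$ (taking $\card{\setS}$ copies of $\vec{v}$).

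For claim~2, I would first record the cycle structure of $\cycshift{\setS}$. Enumerate $\setS = \{s_0 < s_1 < \dots < s_{m-1}\}$ in increasing order, with $m \eqdef \card{\setS}$. Unfolding the definition of $\cycshift{\setS}$ gives $\cycshift{\setS}(s_j) = s_{j+1}$ for $j < m-1$ and $\cycshift{\setS}(s_{m-1}) = s_0$, i.e.\ $\cycshift{\setS}(s_j) = s_{(j+1)\bmod m}$ for every $j$, while $\cycshift{\setS}(\alpha) = \alpha$ for $\alpha\notin\setS$. A routine induction on $i$ then yields $\cycshift{\setS}^i(s_j) = s_{(j+i)\bmod m}$ for all $i,j$, and $\cycshift{\setS}^i(\alpha) = \alpha$ for $\alpha\notin\setS$. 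Evaluating $\RS{\setS}{v}$ pointwise: for $\alpha = s_j\in\setS$,
\[
  \RS{\setS}{v}(\alpha) \;=\; \sum_{i=0}^{m-1}\vec{v}\bigl(\cycshift{\setS}^i(s_j)\bigr) \;=\; \sum_{i=0}^{m-1}\vec{v}\bigl(s_{(j+i)\bmod m}\bigr) \;=\; \sum_{\db\in\setS}\vec{v}(\db),
\]
since $i\mapsto(j+i)\bmod m$ is a bijection of $\{0,\dots,m-1\}$; and this equals $\sum_{\db\in\D}\vec{v}(\db) = \weight{\vec{v}}$ because $\support{\vec{v}}\subseteq\setS$. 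For $\alpha\notin\setS$, every $\cycshift{\setS}^i$ fixes $\alpha$, so $\RS{\setS}{v}(\alpha)$ is the $m$-fold sum of $\vec{v}(\alpha)$ with itself in $G$; and $\alpha\notin\setS\supseteq\support{\vec{v}}$ forces $\vec{v}(\alpha)=0$, hence $\RS{\setS}{v}(\alpha)=0$.

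There is no genuinely hard step here: the only mild subtlety is getting the cycle indexing right so that summing over all $\setS$-rotations visits each element of $\setS$ exactly once; the rest is unfolding definitions and using $\support{\vec{v}}\subseteq\setS$.
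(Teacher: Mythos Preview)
Your proof is correct; both claims follow exactly as you argue from the definition of $\RS{\setS}{v}$ and the cycle structure of $\cycshift{\setS}$. The paper in fact states this proposition without proof, evidently regarding it as an immediate consequence of the definitions, so your write-up simply fills in the routine verification that the paper omits.
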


\subsubsection*{Identifying Reversible Sets of Vectors}
\begin{restatable}{theorem}{lemreversability}
    \label{lem:reversability}
    \label{thm:reversibility}

Let $V$ be a set of data vectors 
and $\vec{x}\in V$.
Then $\vec{x}$ is reversible in $V$ if, and only if,
$\weight{\vec{x}}$ is reversible in $\weight{V}$,
i.e.,
there exist $\vec{v}_1,\vec{v}_2,\dots,\vec{v}_n\in V$ such that
$-\weight{\vec{x}} = \sum_{i=1}^n \weight{\vec{v}_i}$.
\end{restatable}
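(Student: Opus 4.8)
The plan is to prove the two implications separately. The forward direction is an immediate consequence of the fact that the \emph{weight} homomorphism is invariant under renaming, and the backward direction uses the ``equalizing'' \cref{lem:equalize} to lift an identity in $G$ back to an identity between data vectors.

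First I would record the elementary observation that $\weight{\vec{v}\circ\theta}=\weight{\vec{v}}$ for every data vector $\vec{v}$ and every permutation $\theta:\setD\to\setD$, since $\theta$ only reindexes the finitely many nonzero summands of $\sum_{\da\in\setD}\vec{v}(\da)$; hence $\weight{\sum_{i=1}^n\vec{v}_i\circ\theta_i}=\sum_{i=1}^n\weight{\vec{v}_i}$, i.e.\ the weight of a \pproduct\ of $V$ is always a finite sum over $\weight{V}$. For the forward direction, if $\vec{x}$ is reversible in $V$ then $-\vec{x}=\sum_{i=1}^n\vec{v}_i\circ\theta_i$ for some $\vec{v}_i\in V$ and permutations $\theta_i$, so applying $\weight{\cdot}$ gives $-\weight{\vec{x}}=\weight{-\vec{x}}=\sum_{i=1}^n\weight{\vec{v}_i}$.

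For the backward direction, assume $-\weight{\vec{x}}=\sum_{i=1}^n\weight{\vec{v}_i}$ with $\vec{v}_i\in V$. Since $\vec{x}\in V$, the vector $\vec{x}$ itself is trivially a \pproduct\ of $V$, so it suffices to exhibit $-\vec{x}$ as a \pproduct\ of $V$. I would fix the finite set $\setS\eqdef\support{\vec{x}}\cup\bigcup_{i=1}^n\support{\vec{v}_i}$ and consider
$$\vec{w}\eqdef\bigl(\RS{\setS}{x}-\vec{x}\bigr)+\sum_{i=1}^n\RS{\setS}{v_i}.$$
Using \cref{lem:equalize}(2), a pointwise check shows $\vec{w}=-\vec{x}$: for $\da\in\setS$ the value is $\weight{\vec{x}}-\vec{x}(\da)+\sum_{i=1}^n\weight{\vec{v}_i}=-\vec{x}(\da)$, and for $\da\notin\setS$ all contributions (including $\vec{x}(\da)$, since $\support{\vec{x}}\subseteq\setS$) vanish. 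Finally, $\RS{\setS}{x}-\vec{x}=\sum_{j=1}^{\card{\setS}-1}\vec{x}\circ\cycshift{\setS}^j$ is a \pproduct\ of $\{\vec{x}\}$, and each $\RS{\setS}{v_i}$ is a \pproduct\ of $\{\vec{v}_i\}$ by \cref{lem:equalize}(1), so $\vec{w}=-\vec{x}$ is a sum of \pproducts\ of members of $V$ and hence a \pproduct\ of $V$.

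The only mildly delicate part is the pointwise verification that $\vec{w}=-\vec{x}$, together with checking that a single finite set $\setS$ can be used simultaneously for $\vec{x}$ and all the $\vec{v}_i$ so that \cref{lem:equalize} applies to each of them; beyond that, the argument is a direct combination of \cref{lem:equalize} and permutation-invariance of the weight, and I do not expect a genuine obstacle.
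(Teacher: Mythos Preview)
Your proof is correct and follows essentially the same approach as the paper: the forward direction via permutation-invariance of $\weight{\cdot}$, and the backward direction via \cref{lem:equalize} applied to a common finite set $\setS$, rewriting $-\vec{x}$ as $(\RS{\setS}{x}-\vec{x})+\sum_i\RS{\setS}{v_i}$. Your inclusion of $\support{\vec{x}}$ in $\setS$ is in fact slightly more careful than the paper's own choice, but otherwise the arguments are the same.
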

\begin{proof}[Proof of \cref{thm:reversibility}]
    For the only if direction we need to show that
    if $-\vec{x} = \sum_{i=1}^n\vec{v_i}\circ\theta_i$ for vectors
    $\vec{v_i}\in V$ and permutations $\theta_i:\D\to\D$.
    Since $\mathit{weight}$ is a homeomorphism we observe that
    $-\weight{\vec{x}}$ is expressible as a sum $\sum_{i=1}^j\weight{\vec{w_i}}$, where $\vec{w_i}\in V$.
    The claim follows from the fact that
    $\weight{\vec{v_i}\circ\theta_i}=\weight{\vec{v_i}}$
    for all $\vec{v_i}:\D\to G$.
  
  For the opposite direction
  assume vectors $\vec{v_1},\vec{v_2},\ldots,\vec{v_n}\in V$ such that
      $
     -\weight{\vec{x}} =\sum_{j=1}^{n} \weight{\vec{v_j}}
     $
     and let $\setS\eqdef  \bigcup_{i=1}^n\support{\vec{v_i}}$.
  First, we aim to show that
  \begin{equation}
      \label{eq:rev1}
      -\RS{\setS}{\vec{x}} 
      = \sum_{j=1}^{n}\RS{\setS}{\vec{v_j}}, 
  \end{equation}
  that is, $-\RS{\setS}{\vec{x}}(\alpha)= \sum_{j=1}^{n}\RS{\setS}{\vec{v_j}}(\alpha)$ for all $\alpha\in\D$.
  As $\vec{x}\in V$, by definition of $\RS{\setS}{\vec{x}}$, this trivially holds for $\alpha \notin \setS$.
  For the remaining $\alpha\in \setS$, note that by point~2 of \cref{lem:equalize}
  we have $\RS{\setS}{\vec{v}}(\alpha) = \weight{\vec{v}}$
  for any $\vec{v}\in V$. In particular this holds for $\vec{x}$ and all $\vec{v_j}$. So, 
  \begin{equation}
      -\RS{\setS}{\vec{x}}(\alpha)
      = -\weight{\vec{x}}
      = \sum_{j=1}^{n}{\weight{\vec{v_j}}}
      = \sum_{j=1}^{n}\RS{\setS}{\vec{v_j}}(\alpha)
  \end{equation}
  which proves \cref{eq:rev1}.
  Unfolding the definition of $\RS{\setS}{\vec{x}}$ we therefore see that
  \begin{align*}
      & - \RS{\setS}{\vec{x}}
      =
     -\left(\vec{x} +  \sum_{i=1}^{\card{\setS}-1} \automorph{\cycshift{\setS}^i}{\vec{x}}\right)
     =
    \sum_{j=1}^{n} \RS{\setS}{\vec{v_j}}
  \end{align*}
  and consequently that
  $-\vec{x} = (\sum_{i=1}^{\card{\setS}-1} \automorph{\cycshift{\setS}^i}{\vec{x}})
    +\sum_{j=1}^{n} \RS{\setS}{\vec{v_j}}
  $.
  Now,
  $(\sum_{i=1}^{\card{\setS}-1} \automorph{\cycshift{\setS}^i}{\vec{x}})$ is clearly a \pproduct\  of $\{\vec{x}\}$ and thus also of $V$.
  By point~1 of \cref{lem:equalize}, also $\sum_{j=1}^{n}\RS{\setS}{\vec{v_j}}$
  is a \pproduct\  of $V$.
  We conclude that $-\vec{x}$ is a \pproduct\  of $V$ and therefore that $\vec{x}$ is reversible in $V$.
  \end{proof}
 
\begin{corollary}\label{cor:polyTimeProcedureToTestRevers}
    Let
    $d\in\N$ and
    $V$ a finite set of vectors $\vec{v}:\D\to \Z^d$.
    There is a polynomial time procedure that checks if $V$ is reversible.
\end{corollary}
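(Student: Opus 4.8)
The plan is to use \cref{thm:reversibility} to reduce the reversibility of $V$ to a single linear feasibility problem over the rationals, which is solvable in polynomial time.

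First I would reformulate reversibility as a membership test in a finitely generated submonoid of $(\Z^d,+)$. By \cref{def:reversible} and \cref{thm:reversibility}, $V$ is reversible if and only if, for every $\vec{v}\in V$, the element $-\weight{\vec{v}}\in\Z^d$ is a finite sum (with repetitions allowed) of elements of $\weight{V}$; equivalently, $-\weight{\vec{v}}$ lies in the submonoid $\mathcal{M}$ of $(\Z^d,+)$ generated by the finite set $\weight{V}$. Since $\mathcal{M}$ is generated as a monoid by $\weight{V}$, it is closed under negation precisely when $-w\in\mathcal{M}$ for every $w\in\weight{V}$. Hence $V$ is reversible if and only if $\mathcal{M}$ is a subgroup of $(\Z^d,+)$.

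Next I would show that this integral condition can be decided over $\mathbb{Q}$. Let $C$ be the rational cone $\{\sum_{w\in\weight{V}}\lambda_w\,w \mid \lambda_w\in\mathbb{Q},\ \lambda_w\ge 0\}$; I claim $\mathcal{M}$ is a group if and only if $-\sum_{w\in\weight{V}}w\in C$. For the forward implication, if $\mathcal{M}$ is a group then each $-w\in\mathcal{M}\subseteq C$, and $C$ is closed under addition, so $-\sum_w w\in C$. For the converse, suppose $-\sum_w w=\sum_w\lambda_w w$ with all $\lambda_w\ge 0$ rational, and fix $w_0\in\weight{V}$; moving the other summands of the left-hand side to the right expresses $-w_0$ as a nonnegative rational combination of $\weight{V}$. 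Clearing a common denominator $N\in\N$ (so $N\ge 1$) then gives $-Nw_0\in\mathcal{M}$, whence $-w_0=(N-1)w_0+(-Nw_0)\in\mathcal{M}$ because $(N-1)w_0\in\mathcal{M}$. Thus $-w\in\mathcal{M}$ for every $w\in\weight{V}$, i.e.\ $\mathcal{M}$ is a group.

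Finally, $\weight{V}$ and the target point $-\sum_{w\in\weight{V}}w$ are computable from $V$ in polynomial time, and deciding whether a given rational vector lies in the rational cone spanned by finitely many rational vectors is exactly the feasibility question for a linear system $\sum_w\lambda_w w=t$, $\lambda_w\ge 0$, solvable in polynomial time by standard linear-programming results; chaining the two equivalences above yields the procedure. I expect the main obstacle to be the second step, and in particular its ``if'' direction, where one must convert a rational membership certificate for the cone $C$ into an integral one for the monoid $\mathcal{M}$ --- which is exactly what the denominator-clearing argument accomplishes --- while the reformulation via the single point $-\sum_w w$ is what keeps the test a plain linear program rather than one involving strict positivity constraints.
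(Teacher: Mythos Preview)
Your argument is correct and follows the paper's approach closely: both reduce via \cref{thm:reversibility} to testing membership of each $-\weight{\vec{v}}$ in the submonoid generated by $\weight{V}$, both observe that rational cone membership suffices because the target is itself a generator (so a rational certificate can be scaled and then corrected by adding $(N-1)$ copies of the target), and both finish by appealing to polynomial-time linear programming.

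The one genuine difference is that the paper runs $|V|$ separate linear programs, one per $\vec{v}\in V$, whereas you collapse everything to the single feasibility test $-\sum_{w\in\weight{V}}w\in C$. Your equivalence ``$\mathcal{M}$ is a group $\Leftrightarrow$ $-\sum_w w\in C$'' is a small extra observation not present in the paper; it buys a cleaner formulation (one LP instead of $|V|$) at the cost of a short additional argument, but the underlying denominator-clearing trick is identical.
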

\begin{proof}
    By \cref{lem:reversability}, it suffices to verify for all $\vec{v}\in V$
    that $-\weight{\vec{v}}$ is the sum of elements in ${\weight{V}}$.
    In other words, we need to check for an element $h$ of a finite
    set $H=\{h_1,\ldots,h_k\} \subseteq\setZ^d$, that there exist
    $n_1,\ldots,n_k\in\setN$ with
        $-h=n_1h_1+\cdots+n_kh_k$.

    We show that
    the condition above is satisfied
    if, and only if,
    there exists $\lambda_1,\ldots,\lambda_k\in\Q_{\geq 0}$ such that
    $-h=\lambda_1h_1+\cdots+\lambda_kh_k$.
    The ``only if'' direction is immediate.
    For the converse, assume
    factors $\lambda_1,\ldots,\lambda_k\in\Q_{\geq 0}$
    such
    that $-h=\lambda_1h_1+\cdots+\lambda_kh_k$.
    There exists a positive integer $p$ such that $p\lambda_j\in\setN$ for
    every $j$ and thus
    $-h=(p\lambda_1)h_1+\cdots+(p\lambda_k)h_k+(p-1)h$.
    We have proved the claim.
    
    Now, checking if $-h=\lambda_1h_1+\cdots+\lambda_kh_k$
    has a solution in the non-negative rationals
    is doable in polynomial time
    using linear programming \cite{LProgPTIME}.
  \end{proof}

\subsubsection*{Solving Expressibility in Reversible Sets of Vectors}
In the remainder of this subsection we proof \cref{thm:modules}. We start with a lemma.
All constructions in this section assume $\max_{\vec{v}\in V}|\support{\vec{v}}|<|\setD|$,
that there exists at least one fresh datum in the domain.

\newcommand{\lift}[2]{{\llbracket#2\mapsto #1\rrbracket}}
We first prove that some special data vectors are \pproducts\ of $V$. 
Those vectors are defined by introducing for
every element $g\in G$ and every data value $\da\in\setD$, the data
vector $\lift{g}{\da}$ defined for every $\db\in\setD$
by:
$$\lift{g}{\da}(\db)\eqdef\begin{cases}
  g & \text{ if }\db=\da\\
  0 & \text{ otherwise}
\end{cases}
$$

\begin{lemma}\label{lem:weight}
  Let $V$ be a finite set of data vectors such that
  $\bigcup_{\vec{v}\in V}\support{\vec{v}}\subsetneq\setD$.
  Then, $\lift{g}{\db}$ is a \pproduct\  of $V$
  for every $g$ in the subgroup of $(G,+)$ generated by  $\weight{V}$, and for every $\db\in\setD$.
\end{lemma}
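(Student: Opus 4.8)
The plan is to show that $\lift{g}{\db}$ is a \pproduct\ of $V$ first for generators $g\in\weight{V}$ and for their inverses $-g$, and then extend additively to the whole subgroup generated by $\weight{V}$, exploiting reversibility implicitly through \cref{thm:reversibility} applied to $\weight{V}$. Concretely, fix a vector $\vec{v}\in V$ and let $\setS$ be a finite set with $\support{\vec{v}}\subseteq\setS\subseteq\setD$ and $\db\notin\setS$ — such a set exists by the standing assumption $\bigcup_{\vec{v}\in V}\support{\vec{v}}\subsetneq\setD$. By \cref{lem:equalize}, $\RS{\setS}{\vec{v}}$ is a \pproduct\ of $\{\vec{v}\}$ and it assigns the value $\weight{\vec{v}}$ to every $\alpha\in\setS$ and $0$ elsewhere. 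Now I would "cancel" all but one of these $\card{\setS}$ copies of $\weight{\vec{v}}$ by adding a suitable \pproduct\ of $\{\vec{v}\}$ supported on a shifted copy of $\setS$, leaving exactly one datum carrying $\weight{\vec{v}}$; rotating that datum to $\db$ gives $\lift{\weight{\vec{v}}}{\db}$.

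The cleanest way to realize this cancellation is as follows. Pick a finite set $\setS'\subseteq\setD$ with $\setS'\supseteq\support{\vec{v}}$, $\db\in\setS'$, and $\setS\cap\setS'$ a single element, say $\{\gamma\}$; again available by the freshness assumption and by choosing $\setS,\setS'$ appropriately. Consider $\RS{\setS}{\vec{v}} + (\card{\setS}-1)\cdot\big(\RS{\setS'}{(-\vec{v})}\big)$ — wait, that introduces a sign problem since $-\vec{v}$ need not be in $V$. So instead I would argue differently: the key point is that $\RS{\setS}{\vec{v}}$ equals $\weight{\vec{v}}$ on all of $\setS$. Taking two rotations on overlapping supports, $\RS{\setS_1}{\vec{v}}$ and $\RS{\setS_2}{\vec{v}}$ where $\setS_1,\setS_2$ overlap in exactly the set $\setS_1\setminus\{\db_1\}=\setS_2\setminus\{\db_2\}$, their difference is $\lift{\weight{\vec{v}}}{\db_2}-\lift{\weight{\vec{v}}}{\db_1}$; but that is again a difference, not a sum. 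The honest route is: $\lift{g}{\db}$ for $g=\weight{\vec{v}}$ is obtained because $\weight{\lift{g}{\db}}=g=\weight{\vec{v}}$ and then invoke a vector-level analogue — but here it is simplest to use \cref{thm:reversibility} machinery. Since $V$ is reversible, \cref{thm:reversibility} tells us $-\weight{\vec{v}}=\sum_{j}\weight{\vec{v}_j}$ for some $\vec{v}_j\in V$. Then $\lift{\weight{\vec{v}}}{\db}+\lift{-\weight{\vec{v}}}{\db}=\vec 0$, and I would build $\lift{\weight{\vec{v}}}{\db}$ directly by the rotation-cancellation argument, then $\lift{-\weight{\vec{v}}}{\db}=\sum_j\lift{\weight{\vec{v}_j}}{\db}$ reduces to the generator case.

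For the rotation-cancellation step in detail: let $\setS=\support{\vec{v}}\cup\{\db\}$ if $\support{\vec v}$ is nonempty, or handle the trivial $\weight{\vec v}=0$ case separately. Actually choose $\setS$ large enough and pick a datum $e\in\setS\setminus\support{\vec v}$, $e\neq\db$; let $\setS_0=\setS\setminus\{\db\}$ and $\setS_1=(\setS\setminus\{e\})$, so $\setS_0$ and $\setS_1$ agree on $\setS\setminus\{e,\db\}$ and each contains $\support{\vec v}$. Consider $\RS{\setS}{\vec v}$ on $\setD$: it is $\weight{\vec v}$ exactly on $\setS$. To isolate $\db$, add $\card{\setS}-1$ further \pproducts\ of $\{\vec v\}$ each of which is $\weight{\vec v}$ on a set meeting $\setS$ in a single point — then nothing cancels, it only accumulates, which is wrong. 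I see the genuine difficulty now and it is the main obstacle: cancellation requires negative contributions, so one cannot isolate $\db$ within \pproducts\ of $\{\vec v\}$ alone; reversibility of $V$ is essential, and the clean argument is: since $V$ is reversible, $-\vec v$ is a \pproduct\ of $V$, so $\RS{\setS}{(-\vec v)}$ is available; then $\RS{\setS_0\cup\{\db\}}{\vec v} + \RS{\setS_0\cup\{e\}}{(-\vec v)}$, with the two $\setS$-sets overlapping in $\setS_0$, telescopes to $\lift{\weight{\vec v}}{\db}-\lift{\weight{\vec v}}{e}$ plus $\vec 0$ on $\setS_0$, i.e.\ to $\lift{\weight{\vec v}}{\db}+\lift{-\weight{\vec v}}{e}$. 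Iterating this to clear the $e$-component (using that $-\weight{\vec v}$ is a sum of weights by \cref{thm:reversibility}, recursively pushing the residue onto fresh data and ultimately annihilating it since the residue weights sum to $0$ and a weight-$0$ single-datum vector $\lift{0}{e}=\vec 0$) yields exactly $\lift{\weight{\vec v}}{\db}$. Thus every $\lift{g}{\db}$ with $g\in\weight{V}$, and by the same token with $g\in -\weight{V}$, is a \pproduct\ of $V$. Finally, for arbitrary $g$ in the subgroup generated by $\weight{V}$, write $g=\sum_i \epsilon_i\, w_i$ with $w_i\in\weight{V}$, $\epsilon_i\in\{+1,-1\}$, and use $\lift{g}{\db}=\sum_i\lift{\epsilon_i w_i}{\db}$ together with additivity of \pproducts\ to conclude. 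The one bookkeeping point to be careful about is always keeping a fresh datum available for the telescoping rotations, which is exactly what the hypothesis $\bigcup_{\vec{v}\in V}\support{\vec{v}}\subsetneq\setD$ provides; I would phrase the rotations over progressively chosen finite sets to make this explicit.
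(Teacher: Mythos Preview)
Your proposal has a genuine gap in the ``iterating to clear the $e$-component'' step. After you build
$\RS{\setS_0\cup\{\db\}}{\vec v} + \RS{\setS_0\cup\{e\}}{(-\vec v)}=\lift{\weight{\vec v}}{\db}+\lift{-\weight{\vec v}}{e}$
as a \pproduct, you still need to remove the residue $\lift{-\weight{\vec v}}{e}$. Your plan is to write $-\weight{\vec v}$ as a sum of weights via \cref{thm:reversibility} and recurse, ``pushing the residue onto fresh data''. But each application of the same construction produces a new residue of exactly the same form at a new datum: you only ever \emph{translate} the $-g$ term, you never shrink it. The assertion that ``the residue weights sum to $0$'' is unjustified --- after any finite number of steps the leftover is still a single $\lift{-g}{\cdot}$ somewhere, so the recursion does not terminate.

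The fix is a small but decisive change, and it is precisely what the paper does: instead of taking the two rotation-sums over sets of the \emph{same} size, take them over \emph{nested} sets of different sizes. Concretely, form $\vec{x}=\sum_{j}\vec{v_j}$ with $\weight{\vec{x}}=g$, let $\setS=\support{\vec{x}}$ and $\setT=\setS\cup\{\alpha\}$ for some fresh $\alpha$. By \cref{lem:equalize}, $\RS{\setT}{\vec{x}}$ equals $g$ on all of $\setT$ while $\RS{\setS}{\vec{x}}$ equals $g$ on all of $\setS$, so $\RS{\setT}{\vec{x}}-\RS{\setS}{\vec{x}}=\lift{g}{\alpha}$ directly, with no residual term to chase. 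Reversibility is invoked just once, to make $-\RS{\setS}{\vec{x}}$ a \pproduct\ of $V$. Aggregating into a single $\vec{x}$ with the desired weight also handles arbitrary $g$ in the generated subgroup in one shot, so the separate treatment of generators, inverses, and sums becomes unnecessary.
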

\begin{proof}
  Since $g$ is in the subgroup generated by $\weight{V}$, there
  exist a sequences $\vec{v_1},\ldots,\vec{v_n}$
  of elements in $V$ such that:
  $$g=\sum_{j=1}^n\weight{\vec{v_j}}$$
  Consider now the vector $\vec{x}\eqdef\sum_{j=1}^n\vec{v_j}$.
  It has three relevant properties:
  \begin{enumerate}
      \item it is a \pproduct\  of $V$,
      \item its support is contained in $\bigcup_{\vec{v}\in V}\support{\vec{v}}$, and
      \item it satisfies $g=\weight{\vec{x}}$.
  \end{enumerate}
  By point~2 and the assumption that the combined support $\bigcup_{\vec{v}\in V}\support{\vec{v}}$
  is strictly included in $\D$, we can pick some $\da\not\in\support{\vec{x}}$.
  Let $\setS,\setT\subseteq\D$ be defined as
  $$
  \setS\eqdef\support{\vec{x}} \qquad\text{and}\qquad
  \setT\eqdef{\support{x}\cup\{\alpha\}}
  $$
  Then by \cref{lem:equalize} point~1,
  both $\RS{\setS}{\vec{x}}$ and
  and $\RS{\setT}{\vec{x}}$ are \pproducts\  of $V$.
  Since $V$ is reversible, so is the inverse $-\RS{\setS}{\vec{x}}$.
  It remains to observe that
  $$
  \lift{g}{\alpha} = \RS{\setT}{\vec{x}} -\RS{\setS}{\vec{x}}.
  $$
  Indeed, for all $\delta\notin\setT$ we have
  $\lift{g}{\alpha}(\delta) = \RS{\setS}{\vec{x}}(\delta)=\RS{\setT}{\vec{x}}(\delta)=0$.
  For all $\delta\in\setS$,
  by \cref{lem:equalize} point~2, it holds that
  $\RS{\setS}{\vec{x}}(\delta)=\RS{\setT}{\vec{x}}(\dd)=\weight{\vec{x}}=g$ and therefore that
  $\lift{g}{\alpha}(\delta) = \RS{\setS}{\vec{x}}(\delta)-\RS{\setT}{\vec{x}}(\delta) = g-g=0$.
  For the last case that $\delta=\alpha\in\setT\setminus\setS$, again by
  \cref{lem:equalize} point~2, we have
  $\lift{g}{\alpha}(\delta) = \RS{\setS}{\vec{x}}(\delta)-\RS{\setT}{\vec{x}}(\delta) = g-0=g$.
  Now, the vector $\lift{g}{\db}=\lift{g}{\da}\circ \cycshift{\{\da,\db\}}$ which completes the proof
\end{proof}

\begin{lemma}\label{lem:place}
  Let $V$ be a finite, reversible set of data vectors such that
  $\bigcup_{\vec{v}\in V}\support{\vec{v}}\subsetneq\setD$.
     The data vector $\lift{g}{\da}-\lift{g}{\db}$ is a \pproduct
     of $V$ for every $g$ in the subgroup of $(G,+)$ generated by
   $\{\vec{v}(\delta) \mid \delta\in\setD,~\vec{v}\in V\}$, and for every $\da,\db\in\setD$.
\end{lemma}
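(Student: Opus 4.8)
The statement asks to express $\lift{g}{\da}-\lift{g}{\db}$ as a \pproduct\ of $V$ whenever $g$ lies in the subgroup generated by the \emph{individual entries} $\{\vec{v}(\delta)\}$ of the vectors in $V$ (a potentially larger group than the one generated by the weights $\weight{V}$, which was handled in \cref{lem:weight}). The plan is to reduce this new case to \cref{lem:weight} by the following idea: if $g=\vec{v}(\delta)$ is a single entry of some base vector $\vec{v}$, then I want to build a new data vector whose \emph{weight} is exactly $g$, so that \cref{lem:weight} applies. I would achieve this by taking $\vec{v}$ and subtracting off all of its entries except the one at $\delta$. Concretely, since $g$ is in the subgroup generated by $\bigcup_{\vec{v}\in V}\support{\vec{v}}$-entries, it suffices by linearity (i.e.~summing finitely many instances of the claim with signs) to treat the case $g=\vec{v}(\delta)$ for a fixed $\vec{v}\in V$ and $\delta\in\support{\vec{v}}$; the empty case $g=0$ is trivial, and negative occurrences are covered because the target $\lift{g}{\da}-\lift{g}{\db}$ is anti-symmetric in $(\da,\db)$ and additive in $g$.

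For the core case $g=\vec{v}(\delta)$, I would argue as follows. Pick two fresh data values $\da',\db'$ not in $\bigcup_{\vec{v}\in V}\support{\vec{v}}$ (possible since that union is strictly contained in $\setD$; if only one fresh value is guaranteed, one can still relocate supports via rotations $\cycshift{\cdot}$, so this is not a real constraint). The vector $\vec{w}\eqdef \vec{v}\circ\cycshift{\{\delta,\db'\}}$ is a \pproduct\ of $\{\vec{v}\}$ and agrees with $\vec v$ except that the entry at $\delta$ has been moved to $\db'$. Now consider $\vec{v}-\vec{w}$: at every data value other than $\delta$ and $\db'$ the entries of $\vec v$ and $\vec w$ coincide, so $\vec{v}-\vec{w}$ is supported on $\{\delta,\db'\}$ with $(\vec{v}-\vec{w})(\delta)=\vec{v}(\delta)-0 = g$ and $(\vec{v}-\vec{w})(\db')=0-\vec{v}(\delta)=-g$; that is, $\vec{v}-\vec{w}=\lift{g}{\delta}-\lift{g}{\db'}$. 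Crucially $\vec{w}$ is reversible in $V$ because $\vec v\in V$ is and $\mathit{weight}$-based reversibility (\cref{thm:reversibility}) is invariant under permutations; hence $-\vec{w}$, and so $\lift{g}{\delta}-\lift{g}{\db'}$, is a \pproduct\ of $V$. Finally, apply a rotation: $\lift{g}{\da}-\lift{g}{\db} = \big(\lift{g}{\delta}-\lift{g}{\db'}\big)\circ\pi$ for a suitable permutation $\pi$ sending $\delta\mapsto\da$ and $\db'\mapsto\db$ (built from rotations as in the previous lemmas), and since the class of \pproducts\ of $V$ is closed under composition with permutations, we are done.

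The main obstacle I anticipate is purely bookkeeping rather than conceptual: keeping the fresh-value hypothesis $\bigcup_{\vec{v}\in V}\support{\vec{v}}\subsetneq\setD$ in force while I need \emph{two} distinct fresh values $\da',\db'$, and while $\da,\db$ themselves may or may not lie in the combined support. This is handled by first doing all the work on fresh scratch data $\delta,\db'$ (which exist after possibly rotating $\vec v$ out of the way) and only at the very end transporting the result to the requested $(\da,\db)$ via a permutation — exactly the trick already used at the end of \cref{lem:weight} with $\lift{g}{\db}=\lift{g}{\da}\circ\cycshift{\{\da,\db\}}$. A secondary, minor point is verifying that the reduction from a general subgroup element $g$ to single entries $\vec v(\delta)$ really is just finitely many signed copies of the single-entry claim added together; this follows since \pproducts\ of a reversible set $V$ are closed under both addition and negation, so the set of $g$ for which $\lift{g}{\da}-\lift{g}{\db}$ is a \pproduct\ of $V$ (for all $\da,\db$) is a subgroup of $G$, and it contains every generator $\vec v(\delta)$ by the core case.
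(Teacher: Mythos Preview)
Your argument is correct and rests on the same device as the paper's proof: take a \pproduct\ $\vec{x}$ whose value at one datum is $g$ and whose support avoids a chosen fresh datum, then form $\vec{x}-\vec{x}\circ\cycshift{\{\cdot,\cdot\}}$ to isolate $\lift{g}{\cdot}-\lift{g}{\cdot}$, and finally transport by a permutation. The only difference is packaging: the paper treats a general $g=\sum_j\vec{v_j}(\delta_j)$ in one shot by setting $\vec{x}\eqdef\sum_j\vec{v_j}\circ\cycshift{\{\alpha,\delta_j\}}$ (so that $\vec{x}(\alpha)=g$ and $\support{\vec{x}}\subseteq\bigcup_{\vec v\in V}\support{\vec v}$) and then takes $\vec{y}=\vec{x}-\vec{x}\circ\cycshift{\{\alpha,\beta\}}$, whereas you handle a single generator $g=\vec v(\delta)$ and afterwards close under the subgroup operations.

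Two minor remarks. First, your opening plan to reduce to \cref{lem:weight} is never actually carried out; your execution is self-contained and, like the paper's proof, makes no use of that lemma. Second, your appeal to \cref{thm:reversibility} to conclude that $-\vec{w}$ is a \pproduct\ of $V$ is not literally licensed, since that theorem is stated only for vectors $\vec{x}\in V$ and your $\vec{w}=\vec v\circ\cycshift{\{\delta,\beta'\}}$ need not lie in $V$. This is harmless, though: $-\vec{v}$ is a \pproduct\ of $V$ by reversibility of $\vec v\in V$, hence so is $-\vec{w}=(-\vec{v})\circ\cycshift{\{\delta,\beta'\}}$ directly. (For the same reason you need only the single fresh value $\beta'$, not two.)
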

\begin{proof}
  Let $\setT\eqdef\bigcup_{\vec{v}\in V}\support{\vec{v}}$ and
  assume w.l.o.g.~that $\da\not=\db$ since otherwise the claim is trivial.
  It suffices to show the claim for $\da\in\setT$ and $\db\not\in\setT$.

  We first show that there exists a data vector $\vec{x}$ that is a
  \pproduct\  of $V$ such that $\vec{x}(\da)=g$ and such that
  $\support{\vec{x}}\subseteq\setT$.
  Since $g$ is in the subgroup generated by
  $\{\vec{v}(\delta) \mid \delta\in\setD,~\vec{v}\in V\}$,
  there exist vectors $\vec{v_1},\ldots,\vec{v_n}\in V$
  and data values
  $\delta_1,\ldots,\delta_n\in \D$
  such that:
  $$g=\sum_{j=1}^n\vec{v_j}(\delta_j).$$
  We can assume without loss of generality that $\vec{v_j}(\delta_j)$ are not
  equal to zero and hence $\delta_j\in\setT$.

  Let $\theta_j\eqdef\cycshift{\{\da,\delta_j\}}:\D\to\D$ be the permutation that exchanges $\da$ and
  $\delta_j$
  and consider the vector $\vec{x}$, defined as follows.
  $$\vec{x}=\sum_{j=1}^n\vec{v_j}\circ\theta_j$$
  Observe that $\vec{x}$ is a \pproduct\  of $V$ and
  $\vec{x}(\da)=g$ as it was required. Moreover, since
  $\delta_j,\da\in\setT$, we deduce that
  $\support{\vec{v_j}\circ\theta_j}$
  is included in $\setT$ and therefore that $\support{\vec{x}}\subseteq \setT$.

  \medskip
  To show the claim,
  let ${\theta}\eqdef\cycshift{\{\alpha,\beta\}}$ be the permutation that swaps $\da$ and $\db$
  and consider the vector
  $$\vec{y}\eqdef \vec{x}-\vec{x}\circ{\theta}.$$
  For all $\delta\in\setD\setminus\{\da,\db\}$ we get
  $\vec{y}(\delta)=\vec{x}(\delta)-\vec{x}({\theta}(\delta))=\vec{x}(\delta)-\vec{x}(\delta)=0.$
  Moreover, $\vec{y}(\da)=\vec{x}(\da)-\vec{x}({\theta}(\da))=g-\vec{x}(\db)=g$, similarly $\vec{y}(\db)=-g$.
  We concluding that $\vec{y}$ is a \pproduct\  of $V$ and $\vec{y}=\lift{g}{\da}-\lift{g}{\db}$.
\end{proof}

We can now prove our main theorem.
\begin{theorem}
  \label{thm:modules}
  Let $V$ be a finite, reversible set of data vectors with
  $\bigcup_{\vec{v}\in V}\support{\vec{v}}\subsetneq\setD$.
  A data vector $\vec{x}$ 
  is a \pproduct\  of $V$ if, and only if, the
  following two conditions hold.
  \begin{itemize}
  \item $\weight{\vec{x}}$ is in the subgroup of $(G,+)$ generated by $\{\weight{\vec{v}} \mid
    \vec{v}\in V\}$, and
  \item $\vec{x}(\da)$ is in the subgroup of $(G,+)$ generated by $\{\vec{v}(\delta) \mid
    \delta\in \setD,~\vec{v}\in V\}$ for every $\da\in\setD$.
    \end{itemize}
\end{theorem}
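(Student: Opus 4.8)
The plan is to prove both directions, with the forward direction being essentially routine and the backward direction being the substantial one that relies on the two preparatory lemmas.

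For the ``only if'' direction, suppose $\vec{x} = \sum_{i=1}^n \vec{v_i}\circ\theta_i$ is a \pproduct\ of $V$. Applying the $\mathit{weight}$ homomorphism and using $\weight{\vec{v}\circ\theta}=\weight{\vec{v}}$, I would get $\weight{\vec{x}} = \sum_{i=1}^n \weight{\vec{v_i}}$, which lies in the subgroup generated by $\weight{V}$, giving the first condition. For the second condition, fix any $\da\in\setD$; then $\vec{x}(\da) = \sum_{i=1}^n (\vec{v_i}\circ\theta_i)(\da) = \sum_{i=1}^n \vec{v_i}(\theta_i(\da))$, and each summand is of the form $\vec{v}(\delta)$ with $\vec{v}\in V$, $\delta\in\setD$, so $\vec{x}(\da)$ lies in the stated subgroup.

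For the ``if'' direction, assume the two conditions hold. The idea is to build $\vec{x}$ up from ``atomic'' single-datum vectors $\lift{g}{\da}$ produced by the preparatory lemmas, and then correct the total weight. Concretely: enumerate $\support{\vec{x}} = \{\da_1,\ldots,\da_m\}$ and fix one auxiliary datum $\db\notin\bigcup_{\vec{v}\in V}\support{\vec{v}}$, which exists by the strict-inclusion hypothesis. By the second condition each $\vec{x}(\da_i)$ lies in the subgroup generated by $\{\vec{v}(\delta)\}$, so \cref{lem:place} tells us that $\lift{\vec{x}(\da_i)}{\da_i} - \lift{\vec{x}(\da_i)}{\db}$ is a \pproduct\ of $V$ for each $i$. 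Summing these over $i=1,\ldots,m$ yields a \pproduct\ of $V$ equal to $\vec{x} - \lift{s}{\db}$ where $s \eqdef \sum_{i=1}^m \vec{x}(\da_i) = \weight{\vec{x}}$ (assuming $\db\neq\da_i$, which holds since $\db\notin\support{\vec{x}}$; if $\support{\vec{x}}=\emptyset$ then $\vec{x}=\vec{0}$ is the empty sum and we are done). Now by the first condition $\weight{\vec{x}}$ is in the subgroup generated by $\weight{V}$, so \cref{lem:weight} gives that $\lift{\weight{\vec{x}}}{\db} = \lift{s}{\db}$ is itself a \pproduct\ of $V$. Adding this \pproduct\ to the previous one, and using that a sum of two \pproducts\ of $V$ is again a \pproduct\ of $V$, we obtain that $(\vec{x} - \lift{s}{\db}) + \lift{s}{\db} = \vec{x}$ is a \pproduct\ of $V$, as desired.

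The main obstacle is not the argument above, which is a clean telescoping once the lemmas are in hand, but rather making sure the bookkeeping about supports and the auxiliary datum $\db$ is consistent: one must check $\db$ is distinct from every $\da_i$ so that the $\lift{}{}$ vectors genuinely telescope, handle the degenerate case $\vec{x}=\vec{0}$ separately, and verify that the finitely many \pproducts\ being combined are all over the same $V$ so their sum is a \pproduct. I would also remark that the hypothesis $\bigcup_{\vec{v}\in V}\support{\vec{v}}\subsetneq\setD$ is used exactly to guarantee such a fresh $\db$, consistent with the standing assumption stated before \cref{lem:weight}.
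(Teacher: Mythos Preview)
Your proof is correct and follows essentially the same route as the paper's: the forward direction is verbatim the paper's, and for the converse you sum the vectors $\lift{\vec{x}(\da)}{\da}-\lift{\vec{x}(\da)}{\db}$ from \cref{lem:place} to get $\vec{x}-\lift{\weight{\vec{x}}}{\db}$, then add $\lift{\weight{\vec{x}}}{\db}$ via \cref{lem:weight}. One small remark: your requirement that $\db$ be fresh (and your claim that this forces $\db\notin\support{\vec{x}}$, which does not actually follow) is unnecessary---the paper simply picks an arbitrary $\delta\in\setD$, and the identity $\sum_{\da\in\support{\vec{x}}}(\lift{\vec{x}(\da)}{\da}-\lift{\vec{x}(\da)}{\delta})=\vec{x}-\lift{\weight{\vec{x}}}{\delta}$ holds regardless of whether $\delta$ lies in $\support{\vec{x}}$.
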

\begin{proof}
  If $\vec{x}$ is a \pproduct\  of $V$
  there exists a sequence $\vec{v_1},\ldots,\vec{v_n}$ of data vectors in $V$ and
  a sequence $\theta_1,\ldots,\theta_n$ of data permutations such
  that $\vec{x}=\sum_{j=1}^n\vec{v_j}\circ\theta_j$. We derive that
  $\weight{\vec{x}}=\sum_{j=1}^n\weight{\vec{v_j}\circ\theta_j}$. Since
  $\weight{\vec{v_j}\circ\theta_j}=\weight{\vec{v_j}}$, it follows that $\weight{\vec{x}}$ is
  in the group generated by $\weight{V}$.
  Moreover, for every $\da\in\setD$, we have
    $\vec{x}(\da)=\sum_{j=1}^n\vec{v_j}(\theta_j(\da))$. Thus
    $\vec{x}(\da)$ is in the group generated by $\{\vec{v}(\delta) \mid
    \delta\in \setD,~\vec{v}\in V\}$.
  
  For the converse direction, assume that $\vec{x}$ is a data vector satisfying the two
  conditions. We pick $\delta\in\setD$. From condition~2 and \cref{lem:place} we derive that
  for every $\da\in\support{\vec{x}}$, the data vector
  $\lift{\vec{x}(\da)}{\da}-\lift{\vec{x}(\da)}{\delta}$ is a \pproduct\ 
  of $V$. It follows that the $\vec{y}\eqdef\sum_{\da\in\support{\vec{x}}}\left(\lift{\vec{x}(\da)}{\da}-\lift{\vec{x}(\da)}{\delta}\right)$ is
  a \pproduct\  of $V$. Notice that this vector is
  equal to $\vec{x}-\lift{\weight{\vec{x}}}{\delta}$.
  By condition~1, \cref{lem:weight} applies and implies that
  $\lift{\weight{\vec{x}}}{\delta}$ is a \pproduct\  of $V$.
  So, $\vec{x}$ must be a \pproduct\  of $V$.
\end{proof}
\begin{corollary}\label{cor:modules}
    Let $\D$ be infinite, $d\in\N$, and
    $V$ a finite, reversible set of vectors $\vec{v}:\D\to \Z^d$.
    There is a polynomial time procedure that
    determines if a given target vector $\vec{x}:\D\to\Z^d$ is a \pproduct\  of $V$.
\end{corollary}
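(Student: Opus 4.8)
The plan is to reduce the problem, via \cref{thm:modules}, to a bounded number of subgroup--membership tests in $(\Z^d,+)$. First note that the hypotheses of \cref{thm:modules} are met: $V$ is reversible by assumption, and since $\D$ is infinite while each $\vec{v}\in V$ has finite support and $V$ itself is finite, we have $\bigcup_{\vec{v}\in V}\support{\vec{v}}\subsetneq\D$. Hence $\vec{x}$ is a \pproduct\ of $V$ if, and only if, (i) $\weight{\vec{x}}$ lies in the subgroup $G_1\le(\Z^d,+)$ generated by $\{\weight{\vec{v}}\mid\vec{v}\in V\}$, and (ii) $\vec{x}(\da)$ lies in the subgroup $G_2\le(\Z^d,+)$ generated by $\{\vec{v}(\delta)\mid\delta\in\D,\ \vec{v}\in V\}$, for every $\da\in\D$. (If reversibility of $V$ is not given but has to be verified, this can itself be done in polynomial time by \cref{cor:polyTimeProcedureToTestRevers}.)

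Next I would argue that deciding (i) and (ii) amounts to only polynomially many membership queries of polynomially bounded size. The set $\weight{V}$ has at most $\card{V}$ elements, each obtained by summing a vector over its finite support, so $G_1$ comes with an explicit finite generating set. Likewise the set $\{\vec{v}(\delta)\mid\delta\in\D,\ \vec{v}\in V\}$ is finite, since $\vec{v}(\delta)=0$ for all $\delta\notin\support{\vec{v}}$; all of its (at most $1+\sum_{\vec{v}\in V}\card{\support{\vec{v}}}$) elements occur explicitly in the input. Condition (i) is a single membership query. For (ii), every $\da\notin\support{\vec{x}}$ gives $\vec{x}(\da)=0\in G_2$ for free, so it suffices to test $\vec{x}(\da)\in G_2$ for the $\card{\support{\vec{x}}}$ many data values $\da\in\support{\vec{x}}$. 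In total this is $1+\card{\support{\vec{x}}}$ queries, each of size polynomial in the input.

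It remains to decide, for given $h,h_1,\dots,h_k\in\Z^d$, whether $h=n_1h_1+\dots+n_kh_k$ for some $n_1,\dots,n_k\in\Z$, i.e.\ whether $h$ belongs to the sublattice of $\Z^d$ generated by $h_1,\dots,h_k$. This is the one nontrivial ingredient I would import: membership in a finitely generated subgroup of $\Z^d$ is decidable in polynomial time, for instance by computing a Hermite normal form of the integer matrix whose columns are $h_1,\dots,h_k$ and reducing $h$ against it, using the standard polynomial-time algorithms for integer-matrix normal forms. The main obstacle is therefore purely a matter of bookkeeping --- checking that the translation through \cref{thm:modules} yields only polynomially many, polynomially sized lattice-membership instances, which is immediate from the finiteness of the supports --- rather than anything conceptual.
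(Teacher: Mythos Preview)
Your proposal is correct and is precisely the intended derivation: the paper states \cref{cor:modules} as an immediate consequence of \cref{thm:modules} without spelling out a proof, and the details you supply --- reducing via \cref{thm:modules} to $1+\card{\support{\vec{x}}}$ membership tests in finitely generated subgroups of $\Z^d$, then solving each test in polynomial time via Hermite normal form --- are exactly what is needed.
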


\begin{remark}
  To motivate the freshness assumption, $\bigcup_{\vec{v}\in V}\support{\vec{v}} \subseteq \setD$
  consider the following example, which shows that the two conditions in the claim of \cref{thm:modules}
  are not necessarily sufficient on its own.
  
  $\setD$ is the finite set $\{\da_1,\ldots,\da_k\}$ where
  $\da_1,\ldots,\da_k$ are distinct and $k\geq 2$. Assume that there exists
  $m\in G$ such that $k\cdot m\not=0$. We introduce $V=\{\vec{v}, -\vec{v}\}$ where
  $\vec{v}=\lift{m}{\da_1}+\cdots+\lift{m}{\da_k}$ and $\vec{x}=\lift{(k\cdot m)}{\da_1}$. Observe
  that $\vec{x}$ satisfies the two conditions of \cref{thm:modules}. Assume
  by contradiction that $\vec{x}$ is a \pproduct\  of
  $V$. Since $\vec{v}\circ\theta=\vec{v}$ for every data
  permutation $\theta$ it follows that $\vec{x}=z\cdot \vec{v}$ for some
  $z\in\setZ$. Thus $0=\vec{x}(\da_2)=z\cdot \vec{v}(\da_2)=z\cdot m$, and
  $k\cdot m=\vec{x}(\da_1)=z\cdot \vec{v}(\da_1)=z\cdot m$. Hence $k\cdot m=0$ and we get a
  contradiction. Hence $\vec{x}$ is not a \pproduct\  of
  $V$.
\end{remark}

\section{Applications}\label{sec:applications}
\label{sec:app}
\subsection{Unordered data Petri nets}\label{subsec:udpn}
Unordered data nets extend the classical model of Petri nets by
allowing each token to carry a datum from a countable set $\setD$.
We recall the definition from \cite{RSF2011,rosavelardo11}.
\newcommand{\MS}[1]{{#1}^\oplus}
A multiset over some set $X$ is a function $M:X\to\N$.
The set $\MS{X}$ of all multisets over $X$ is ordered pointwise, and the multiset union of $M,M'\in\MS{X}$
is $(M\oplus M')\in\MS{X}$ with $(M\oplus M')(\alpha)\eqdef M(\alpha)+M'(\alpha)$ for all $\alpha\in X$.
If $M\ge M'$, then the multiset difference $(M\ominus M')$ is defined as the unique $X\in\MS{X}$
with $M=M'\oplus X$.

\newcommand{\Var}{\mathit{Var}}
\begin{definition}
An unordered Petri data net (UPDN) over domain $\setD$ is a tuple $(P,T,F)$ where
$P$ is a finite set of \emph{places},
$T$ is a finite set of \emph{transitions} disjoint from $P$,
and ${F:(P\x T)\cup(T\x P) \to \MS{\Var}}$ is a \emph{flow} function that assigns
each place $p\in P$ and transition $t\in T$
a multiset of over \emph{variables} in $\Var$.

\smallskip
A \emph{marking} is a function $M:P\to \MS{\setD}$.
Intuitively, $M(p)(\da)$ denotes the number of tokens of type $\da$ in place $p$.
A transition $t$ is \emph{enabled} in marking $M$ with \emph{mode} $\sigma$ if
$\sigma:\Var\to D$ is an injection such that
$\sigma(F(p,t)) \le M(p)$ for all $p\in P$.
There is a step $M\step{}M'$ between markings $M$ and $M'$ if
there exists $t$ and $\sigma$ such that $t$ is enabled in $M$ with mode $\sigma$,
and for all $p\in P$,
$$M'(p)=M(p) \ominus \sigma(F(p,t)) \oplus \sigma(F(t,p)).$$
The transitive and reflexive closure of $\step{}$ is written as $\step{*}$.
\end{definition}
Notice that UDPN are a generalization of ordinary P/T nets, which have only one type of token, i.e.~$\setD=\{\bullet\}$.

The decidability status of the reachability problem for UDPN,
which asks if $M\step{*}M'$ holds for given markings $M,M'$ in a given UDPN,
is currently open.
We will discuss here a necessary condition for positive instances,
an invariant sometimes called \emph{state equations} in the Petri net literature.

First, notice that markings in UDPN can be seen as data vectors over the monoid $(\Z^d,+)$.
For any marking $M$ and place $p$, $M(p)$ is a multiset over $\setD$, i.e.~a data vector $M(p):\setD\to\N$.
Markings are therefore isomorphic to data vectors $M:\setD\to\N^d$, where $d=\card{P}$.

Similarly, flow function provides multisets $F(p,t)$ and $F(t,p)$ over the variables $\Var$,
so we can associate to each transition $t$ the corresponding data vectors
$F(\bullet,t)$ and ${F(t,\bullet):\Var\to\N^d}$,
defined as $F(\bullet,t)(x)\eqdef [F(p_1,t)(x),F(p_2,t)(x)\ldots F(p_{\card{P}},t)(x)]$
and analogously, $F(t,\bullet)(x)\eqdef [F(t,p_1)(x),F(t,p_2)(x)\ldots F(t,p_{\card{P}})(x)]$.
Further, the \emph{deplacement} $\Delta(t)$ of transition $t$ is $\Delta(t)\eqdef F(t,\bullet) - F(\bullet,t)$,
which is a function $\Delta(t):\Var\to\Z^d$ over $(\Z^d,+)$.
Now, $M\step{}M'$ iff there is a transition $t$ and injection $\sigma:\Var\to\setD$
such that
$M - F(\bullet,t)\circ\sigma^{-1} \ge \vec{0}$
and $M' = M+\Delta(t)\circ\sigma^{-1}$.
A necessary condition for reachability can thus be formulated as follows.

\begin{proposition}
    If $M\step{*}M'$ then there exists a sequence $t_1,t_2,\ldots,t_k\in T$ of transitions 
    and a sequence $\sigma_1,\sigma_2,\ldots,\sigma_k$ of injections such that
    $M'-M = \sum_{i+1}^k\Delta(t_i)\circ\sigma^{-1}$.
\end{proposition}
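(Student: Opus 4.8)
The plan is a straightforward induction on the number $\ell$ of steps in a run witnessing $M\step{*}M'$. First I would fix such a run and argue about its length.

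For the base case $\ell=0$ we have $M=M'$, so $M'-M=\vec 0$, which is the empty sum obtained by taking $k=0$. For the inductive step, assume the statement for runs of length $\ell-1$ and write a run of length $\ell$ as $M\step{*}M''\step{}M'$, where the prefix $M\step{*}M''$ has $\ell-1$ steps. The induction hypothesis supplies transitions $t_1,\dots,t_k\in T$ and injections $\sigma_1,\dots,\sigma_k$ with
$$M''-M=\sum_{i=1}^{k}\Delta(t_i)\circ\sigma_i^{-1}.$$
For the last step $M''\step{}M'$, the characterization of single steps recalled just before this proposition gives a transition $t_{k+1}\in T$ and an injection $\sigma_{k+1}:\Var\to\setD$ such that $M'=M''+\Delta(t_{k+1})\circ\sigma_{k+1}^{-1}$, i.e.\ $M'-M''=\Delta(t_{k+1})\circ\sigma_{k+1}^{-1}$. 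Since markings are data vectors over the group $(\Z^d,+)$, the identity $M'-M=(M'-M'')+(M''-M)$ holds pointwise, and adding the two equalities above yields
$$M'-M=\sum_{i=1}^{k+1}\Delta(t_i)\circ\sigma_i^{-1},$$
which completes the induction.

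There is essentially no real obstacle here; the argument is pure bookkeeping. The only two points worth a brief remark are: (i) $\Delta(t)\circ\sigma^{-1}$ is well defined for an injection $\sigma:\Var\to\setD$ even when $\support{\Delta(t)}$ is a proper subset of $\Var$, because the restriction of $\sigma$ to $\support{\Delta(t)}$ is still an injection in the sense used in \cref{lem:inj2perm} and the surrounding discussion; and (ii) the difference $M'-M$ of two $\N^d$-valued markings is taken in the ambient group of $\Z^d$-valued data vectors, which is exactly what makes the telescoping step $M'-M=(M'-M'')+(M''-M)$ legitimate.
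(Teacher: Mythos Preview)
Your induction on the length of a witnessing run is correct and is exactly the natural argument. The paper does not actually supply a proof for this proposition: it is stated immediately after the single-step characterization $M'=M+\Delta(t)\circ\sigma^{-1}$ and treated as an evident consequence. Your write-up simply makes that evident consequence explicit, and your two side remarks about the well-definedness of $\Delta(t)\circ\sigma^{-1}$ and about working in the ambient group $\Z^d$ are appropriate sanity checks.
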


We say markings $M$ and $M'$ satisfy the state-equation, if there are transitions and injections satisfy the condition above. A direct consequence of \cref{cor:expr-in-NP} is that one can check this condition in \NP.
\begin{theorem}
    There is an \NP\ algorithm that checks if 
    for any two markings of a UPDN satisfy the state equation.
\end{theorem}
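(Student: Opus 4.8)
The plan is to recognise the state-equation condition as an instance of the Expressibility problem over $(\Z^d,+)$ with $d=\card{P}$ and then invoke \cref{cor:expr-in-NP}. Recall from the proposition above that $M$ and $M'$ satisfy the state equation exactly when $M'-M=\sum_{i=1}^{k}\Delta(t_i)\circ\sigma_i^{-1}$ for some transitions $t_i\in T$ and injections $\sigma_i:\Var\to\setD$. The first step is to replace the deplacements $\Delta(t):\Var\to\Z^d$, whose domain is the variable set, by genuine data vectors over $\setD$: I would fix once and for all an injection $\iota:\Var\to\setD$ (possible since $\setD$ is infinite), put $\vec{v_t}\eqdef\Delta(t)\circ\iota^{-1}$ and $V\eqdef\{\vec{v_t}\mid t\in T\}$, so that $V$ is a finite set of data vectors over $\setD$.

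Next I would show that, for each $t$, the family of vectors $\Delta(t)\circ\sigma^{-1}$ obtained by letting $\sigma$ range over all injections $\Var\to\setD$ is precisely the family $\vec{v_t}\circ\theta$ obtained by letting $\theta$ range over all permutations of $\setD$: composing $\sigma$ with $\iota^{-1}$ turns a mode $\sigma$ into an injection of $\support{\vec{v_t}}$ into $\setD$ with $\Delta(t)\circ\sigma^{-1}=\vec{v_t}\circ(\sigma\circ\iota^{-1})^{-1}$, and \cref{lem:inj2perm} lets one pass freely between such partial injections and permutations of $\setD$. Summing over $i$, this yields the key equivalence: $M$ and $M'$ satisfy the state equation if and only if the target vector $\vec{x}\eqdef M'-M$ is a \pproduct\ of $V$. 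Note that the unbounded number $k$ of fired transitions simply becomes the unbounded length of a \pproduct, so nothing extra is needed to handle it.

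Finally I would note that $V$ and $\vec{x}$ are computable in polynomial time from the input UPDN and the markings $M,M'$ --- each $\vec{v_t}$ is read directly off the flow function $F$, and $\vec{x}$ is obtained by pointwise subtraction on the (finite, explicitly given) supports of $M'$ and $M$ --- so that running the \NP\ procedure of \cref{cor:expr-in-NP} on the instance $(V,\vec{x})$ decides the state equation in \NP. I expect the only delicate point to be the bookkeeping of the second step: one has to check that fixing the embedding $\iota$ of $\Var$ into $\setD$ loses no behaviour, i.e.\ that every mode $\sigma$ is still realised as some permutation of $\setD$ applied to the embedded base vector $\vec{v_t}$. This is exactly what \cref{lem:inj2perm} guarantees, and it is also the only place where the assumption that $\setD$ is large enough to contain $\iota(\Var)$ alongside the supports of $M$ and $M'$ enters.
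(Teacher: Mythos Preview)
Your proposal is correct and follows exactly the route the paper takes: the paper simply states that the theorem is ``a direct consequence of \cref{cor:expr-in-NP}'' without further argument, and you have spelled out the reduction the paper leaves implicit---in particular the passage from deplacements $\Delta(t):\Var\to\Z^d$ to genuine data vectors over $\setD$ via a fixed embedding $\iota$, and the use of \cref{lem:inj2perm} to identify modes $\sigma$ with permutations of $\setD$. Nothing is missing; your write-up is in fact more detailed than the paper's own treatment.
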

The complexity of checking state equations for UDPN thus matches that of the same problem for
ordinary Petri nets (via linear programming).

\medskip
For UDPN where the set of transition effects is itself reversible,
Expressibility can even be decided in polynomial time.
\begin{theorem}
    Let $(P,T,F)$ be a UDPN such that $\{\Delta(t) \mid t\in T'\}$ is reversible.
    Checking if two markings satisfy the state equation is in \P.
\end{theorem}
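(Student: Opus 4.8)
The plan is to recognise the state-equation test as an instance of the Expressibility problem over $(\Z^d,+)$ whose base set is reversible, and then to invoke \cref{cor:modules}.

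First I would set up the translation into data vectors. Put $d\eqdef\card{P}$. As observed above, a marking $M:P\to\MS{\setD}$ is the same datum as a data vector $M:\setD\to\N^d$, hence in particular a data vector into the group $(\Z^d,+)$; consequently $\vec{x}\eqdef M'-M$ is a well-defined data vector over $(\Z^d,+)$. Each displacement $\Delta(t):\Var\to\Z^d$ has the finite domain $\Var$, so, fixing an injection of $\Var$ into $\setD$ (which exists because $\Var$ is finite and $\setD$ infinite), we may regard every $\Delta(t)$ as a data vector $\setD\to\Z^d$ whose support is finite and contained in the image of $\Var$. Let $V\eqdef\{\Delta(t)\mid t\in T\}$ denote the resulting finite set of data vectors; it is reversible by hypothesis.

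Next I would carry out the reduction. Unwinding the definition of the state equation, $M$ and $M'$ satisfy it precisely when there are transitions $t_1,\dots,t_k\in T$ and injections $\sigma_1,\dots,\sigma_k:\Var\to\setD$ with $\vec{x}=\sum_{i=1}^k\Delta(t_i)\circ\sigma_i^{-1}$. By \cref{lem:inj2perm} each such injection can be traded for a genuine permutation of $\setD$ without changing the vector $\Delta(t_i)\circ\sigma_i^{-1}$, so this is exactly the assertion that $\vec{x}$ is a \pproduct\  of $V$. It then remains to check that \cref{cor:modules} is applicable: $V$ is finite and reversible, and the freshness requirement $\bigcup_{\vec{v}\in V}\support{\vec{v}}\subsetneq\setD$ holds because this union is finite (it sits inside the image of $\Var$) whereas $\setD$ is infinite. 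Since $V$ and $\vec{x}=M'-M$ are plainly computable in polynomial time from $(P,T,F)$ and $M,M'$, \cref{cor:modules} yields a polynomial-time algorithm deciding whether $\vec{x}$ is a \pproduct\  of $V$, i.e.~whether $M$ and $M'$ satisfy the state equation, which is the claim.

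I do not expect a genuine obstacle here: the statement is \cref{cor:modules} dressed in Petri-net terminology. The only steps needing a little care are the bookkeeping of the translation --- in particular the point that the mode injections $\sigma_i$ appearing in the state equation are exactly the injections/permutations occurring in the definition of a \pproduct, which is precisely what \cref{lem:inj2perm} delivers --- and the observation that the freshness hypothesis of \cref{thm:modules} holds automatically because displacements have finite support while the data domain is infinite. (Should one additionally wish to test the reversibility hypothesis effectively, \cref{cor:polyTimeProcedureToTestRevers} does so in polynomial time.)
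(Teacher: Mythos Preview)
Your proposal is correct and takes essentially the same approach as the paper: the paper's entire proof is the single sentence ``This directly follows from \cref{cor:modules}.'' You have simply spelled out the bookkeeping behind that sentence---the translation of markings and displacements into data vectors over $(\Z^d,+)$, the use of \cref{lem:inj2perm} to pass between mode injections and permutations, and the verification of the freshness hypothesis---all of which is accurate and matches the surrounding discussion in the paper.
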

This directly follows from \cref{cor:modules}.
Notice that this reversibility condition on the transition effects is
a fairly natural condion. For instance, if a UDPN is reversible in the usual
Petri net parlance, i.e., if its reachability relation $\step{*}$ is symmetric,
then the set $\{\Delta(t) \mid t\in T\}$ is reversible
in the sense of \cref{def:reversible}.
Indeed, otherwise there must be some $t'\in T$ where $-\Delta(t')$ is not a \pproduct\  of
$\{\Delta(t)\mid t\in T\}$. So there are markings $M,M'$ and injection $\sigma$
with $M'=M+\Delta(t')\circ\sigma^{-1}$ and $M'\not\step{*}M$. 

Finally, we remark that by \cref{cor:polyTimeProcedureToTestRevers},
we can in polynomial time check if a given UDPN satisfies the
reversibility condition.

\subsection{Blind Counter Automata}
Blind counter automata \cite{Grei1978}
are finite automata equipped with a number of registers that store integer values
and which can be independently incremented or decremented in each step.
These systems correspond to vector addition systems with states (VASS)
over the integers \cite{HS14}, where transitions are always enabled.
The model can be equipped with data in a natural way.

\begin{definition}
 An \emph{unordered data blind counter automaton} is given by a finite labelled transition system $\Aut \eqdef(Q, E, L)$ where edges in $E$ are labelled by the function $L$ 
 with data vectors in $\funs{\Z^k}{\D}$.
 
 A \emph{configuration} of the automaton is a pair $(q,\vec{v})$ where $q\in Q$ is a state and $\vec{v}\in \funs{\Z^k}{\D}$. 
There is a step $(q,\vec{f})\step{e}(q',\vec{g})$ between two configurations $(q,\vec{f}),(q',\vec{g})$
if $e=(q,q')\in E$ and $\vec{g}=\vec{f}+L((q,q'))\circ\theta$ for a permutation $\theta:\D\to\D$. 
The reachability relation is a transitive closure of the step relation and we denote it by $\step{*}$.
Finally, a sequence of configurations and transitions of a form $(q_0,\vec{x_0})\step{e_1}(q_1,\vec{x_1})\step{e_2}\ldots \step{e_n}(q_n,\vec{x_n})$ we call a \emph{path}.
\end{definition}

\begin{theorem}
The reachability problem for an unordered data blind counters automaton can be solved in $\NP$.
\end{theorem}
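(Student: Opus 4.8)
The plan is to split reachability into a purely graph-theoretic condition on the finite control and an instance of Expressibility with prescribed multiplicities, and then show that both admit polynomially bounded witnesses. The starting observation is that in a blind counter automaton every transition is always enabled, so $(q_0,\vec{x_0})\step{*}(q_f,\vec{x_f})$ holds if, and only if, there is a walk $e_1e_2\cdots e_m$ from $q_0$ to $q_f$ in the directed graph $(Q,E)$ together with permutations $\theta_1,\dots,\theta_m$ of $\setD$ such that $\vec{x_f}-\vec{x_0}=\sum_{i=1}^m L(e_i)\circ\theta_i$. Writing $y_e$ for the number of indices $i$ with $e_i=e$ and grouping the summands by edge, I would use \cref{lem:inj2perm,lem:perm-hist,thm:histogram} together with the homomorphism property \cref{eq:eval-hom} of $\mathit{eval}$ to turn this into the following equivalent statement: there is a walk from $q_0$ to $q_f$ whose Parikh image is $(y_e)_{e\in E}$, and for every edge $e$ there is a histogram $H_e$ over $\support{L(e)}$ of degree $y_e$ with $\vec{x_f}-\vec{x_0}=\sum_{e\in E}\eval{L(e),H_e}$.

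Next I would recall the classical Euler-trail characterisation of which vectors $(y_e)_e\in\N^E$ are Parikh images of walks from $q_0$ to $q_f$ in a directed multigraph: this is the case exactly when (i) the flow is balanced up to the endpoints, i.e.\ $\sum_{e\text{ into }v}y_e-\sum_{e\text{ out of }v}y_e$ equals $1$ if $v=q_f\neq q_0$, equals $-1$ if $v=q_0\neq q_f$, and equals $0$ otherwise; and (ii) the edges $e$ with $y_e>0$, together with $q_0$ and $q_f$, lie in a single weakly connected component of $(Q,E)$. Condition (i) is a system of linear equations in the $y_e$, while condition (ii) depends only on the support $E'\eqdef\{e:y_e>0\}$ and can be checked in polynomial time once $E'$ is fixed.

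This yields the following \NP\ procedure. Guess a set $E'\subseteq E$ and verify condition (ii) for it; guess a finite set $\Gamma\subseteq\setD$ with $\support{\vec{x_f}-\vec{x_0}}\subseteq\Gamma$ and $\card{\Gamma}$ at most $\card{\support{\vec{x_f}-\vec{x_0}}}+1+\sum_{e\in E'}(2\card{\support{L(e)}}-1)$; and finally guess non-negative integers $y_e$ $(e\in E')$ and histogram entries $H_e(\alpha,\beta)$ for $e\in E'$, $\alpha\in\support{L(e)}$, $\beta\in\Gamma$. The procedure then checks in polynomial time that $y_e\ge 1$ on $E'$, that the flow equations (i) hold for $(y_e)_e$ (with $y_e=0$ off $E'$), that each $H_e$ extended by $0$ outside $\Gamma$ is a histogram over $\support{L(e)}$ of degree $y_e$, and that $\sum_{e\in E'}\sum_{\alpha\in\support{L(e)}}L(e)(\alpha)\,H_e(\alpha,\beta)=(\vec{x_f}-\vec{x_0})(\beta)$ for all $\beta\in\Gamma$ (the analogous equality outside $\Gamma$ being automatic). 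Soundness is immediate from the equivalence above, since the guessed $H_e$ decompose into simple histograms by \cref{thm:histogram}, giving permutations that can be laid out along any walk realising $(y_e)_e$; for completeness one starts from a genuine witness, forms the histograms $H_e$ of degree $y_e$ as in the first paragraph, and shrinks their combined support into such a $\Gamma$.

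The main obstacle I anticipate is precisely this last shrinking step: \cref{thm:boundedsupport} is phrased for Expressibility, where histogram degrees are free, whereas here the degree of $H_e$ is pinned to the walk multiplicity $y_e$. I would resolve this by observing that the support-reduction construction in the proof of \cref{thm:boundedsupport} (the passage from $H_{\vec{v}}$ to $F_{\vec{v}}$) leaves each histogram's degree unchanged, so the same bound applies verbatim when the degrees are prescribed. The remaining ingredients are routine: the Euler-trail characterisation of walk Parikh images is standard, and since $\card{\Gamma}$ and $\card{E'}$ are polynomial in the input, the final stage is the feasibility test of an integer linear program with polynomially many variables, constraints and coefficient bits, which by standard bounds on minimal solutions of feasible integer programs has, when solvable, a solution of polynomial bit-size that can itself be guessed — keeping the whole procedure in \NP.
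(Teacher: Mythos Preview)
Your argument is correct (with one small slip: in condition~(ii) the connectivity must be taken in the subgraph $(Q,E')$ spanned by the used edges, not in the full graph $(Q,E)$; your subsequent remark that (ii) ``depends only on the support $E'$'' and your algorithm description make clear that this is what you mean). It takes a genuinely different route from the paper's proof.

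The paper decouples the control-flow and data-vector parts differently. It guesses a \emph{skeleton path} of length at most $|Q|^2$ that visits every state occurring on the witnessing walk, together with concrete permutations instantiating the labels along that skeleton; it then encodes the Eulerian flow-balance of the remaining cycles by adjoining $|Q|$ auxiliary counters supported on a single fresh datum, and reduces the residual problem to a black-box call to \cref{cor:expr-in-NP}. You instead keep the edge multiplicities $y_e$ as unknowns, tie them to the histogram degrees, and fold the Euler flow equations, the histogram conditions and the target equation into one integer linear program, guessing only the edge support $E'$ to handle connectivity. The price you pay is that \cref{thm:boundedsupport} cannot be invoked as a black box: you must reopen its proof to note that the column-merging step $H_{\vec v}\mapsto F_{\vec v}$ preserves each histogram's degree, so the support bound still holds when degrees are prescribed. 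That observation is correct and is precisely the point the paper's modular reduction circumvents. In return your approach dispenses with the skeleton-path decomposition and the auxiliary-counter trick, yielding a single ILP in which the walk multiplicities and the permutation data are solved for simultaneously.
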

\begin{proof}{(\it Sketch).}
Any path $\Pi$ from some initial configuration $(q_0, \vec{x_0})$ to some final configuration $(q_f, \vec{x_f})$ 
can be described as a skeleton path $S$ of length at most $\card{Q}^2$ and a multiset $C$ of simple cycles connected to it (like in \cite{ReachabilityInTwoVASS,SkeletonPathInOCA}).
The key properties of the skeleton path are:
\begin{itemize}
 \item it uses transitions from the path $\Pi$,
 \item it visits every state that is visited by $\Pi$,
 \item it starts in $(q_0, \vec{x_0})$ and ends in $(q_f,\vec{x})$ where $\vec{x}$ is some data vector,
 \item its length is bounded by $\card{Q}^2$.
\end{itemize}
The fact that multiset of edges can be decomposed into a set of cycles is equivalent to the condition that for every state the number of 
incoming edges is equal to the number of outgoing edges in $C$. 
Having above we first guess a skeleton path $S$ and next we solve a system of linear inequalities 
which binds the usage of different simple cycles with the desired effect of them.

Precisely, the algorithm first introduces $\card{Q}$ additional counters to our automaton and 
for every edge $(p,q)$ we change label of it to $L_{new}(p,q)\defeq(\vec{v}+\vec{v}_{p,q,\dc})$ where $\dc$ is a fresh data value and $\vec{v}_{p,q,\dc}$ 
is a data vector such that:
\begin{align*}
    \vec{v}_{p,q,\dc}(d)(\da)&=\begin{cases}
      1 & \text{ if } d=q\text{ and } \da=\dc\\
      -1 & \text{ if } d=p\text{ and } \da=\dc\\
      0 & \text{ otherwise.}
    \end{cases}
\end{align*}

Now we guess the skeleton path, and in addition an instantiations of labels taken along the edges of the skeleton path.  
Suppose that the skeleton path is $q_0\step{e_1}q_1\step{e_2}\ldots \step{e_n}q_f$ and
the instantiations labels looks as follows $L(e_1)\circ\theta_1,L(e_2)\circ\theta_2,\ldots, L(e_n)\circ\theta_n$.
Let $\setS$ denote a set of states visited by the skeleton path.
Now, what remains it to calculate the number of occurrence of edges taken in the cyclic part of our path or in other words we need to express
$$\vec{x}-\vec{x_0}-\left(\sum_{i=1}^n L(e_i)\circ\theta_i\right)$$
as a sum $\sum_{j=1} \vec{v_j}\circ\theta_j$ where $\vec{v_j}$ are labels of edges starting in $S$.
This is an instance of the expressibility problem for data vectors over $(\Z^d,+)$,
which is solvable in \NP\ by \cref{cor:expr-in-NP}.
To conclude, positive instances of the reachability problem are witnessed by
a skeleton path, and a solution for the resulting instance of the expressibility problem,
where the base vectors are labels of edges starting in states visited by the skeleton path.
\end{proof}

\newpage
\bibliography{conferences,journalsabbr,aautocleaned}

\end{document}